\newcommand{\ket}[1]{|#1\rangle}
\newtheorem{theorem}{Theorem}
\newtheorem{lemma}{Lemma}
\newtheorem{definition}{Definition}[section]
\DeclareRobustCommand{\orcidicon}{%
	\begin{tikzpicture}
	\draw[lime, fill=lime] (0,0) 
	circle [radius=0.16] 
	node[white] {{\fontfamily{qag}\selectfont \tiny ID}};
	\draw[white, fill=white] (-0.0625,0.095) 
	circle [radius=0.007];
	\end{tikzpicture}
	\hspace{-2mm}
}
\xdef\csname orcid\x\endcsname{\noexpand\href{https://orcid.org/\csname orcidauthor\x\endcsname}{\noexpand\orcidicon}}
\begin{document}

\preprint{APS/123-QED}

\title{Reducing Circuit Depth in Quantum State Preparation for Quantum Simulation\\Using Measurements and Feedforward}
% Constant-Depth Quantum Circuits for Single Antisymmetric State \\and Bethe Ansatz Preparation via Measurements and Feedforward

\author{Hyeonjun Yeo\orcidA{}}
\email{duguswns11@snu.ac.kr}
\affiliation{Department of Physics and Astronomy, Seoul National University, Seoul 08826, Korea}%Lines break automatically or can be forced with \\
\author{Ha Eum Kim\orcidB{}}
%\email{hamkim0114@gmail.com}
\affiliation{Department of Mathematics, Kyung Hee University, Seoul 02447, Korea}%
\affiliation{Department of Physics, University of Illinois at Urbana-Champaign, Urbana, Illinois 61801, USA}
\author{IlKwon Sohn\orcidC{}}
%\email{d2estiny@kisti.re.kr}
\affiliation{Quantum Network Research Center, Korea Institute of Science and Technology Information, Daejeon 34141, Korea}
\author{Kabgyun Jeong\orcidD{}}
\email{kgjeong6@snu.ac.kr}
\affiliation{Research Institute of Mathematics, Seoul National University, Seoul 08826, Korea}
\affiliation{School of Computational Sciences, Korea Institute for Advanced Study, Seoul 02455, Korea}

\date{\today}% It is always \today, today,
             %  but any date may be explicitly specified
% d-sparse quantum state 후 symmetrization은, logd depth, d\eta\log N width\dots? 아 이럼 d~N이면 못함
% 대신에 내껀 d depth, log d width로 width 희생 안보고 full state quantum prep 후 가는거임.
\begin{abstract}
Reducing circuit depth and identifying an optimal trade-oﬀ between circuit depth and width is crucial
for successful quantum computation. In this context, midcircuit measurement and feedforward have been
shown to significantly reduce the depth of quantum circuits, particularly in implementing logical gates.
By leveraging these techniques, we propose several parallelization strategies that reduce quantum circuit
depth at the expense of increasing width in preparing various quantum states relevant to quantum simulation. With measurements and feedforward, we demonstrate that utilizing unary encoding as a bridge
between two quantum states substantially reduces the circuit depth required for preparing quantum states,
such as sparse quantum states and sums of Slater determinants within the first quantization framework,
while maintaining an efficient circuit width. Additionally, we show that a Bethe wave function, characterized by its high degree of freedom in its phase, can be probabilistically prepared in a constant-depth quantum circuit using measurements and feedforward. We anticipate that our study will contribute to the reduction of circuit depth in initial state preparation, particularly for quantum simulation, which is a critical step toward achieving quantum advantage. 
\end{abstract}

%\keywords{Suggested keywords}%Use showkeys class option if keyword
                              %display desired
\maketitle
%\tableofcontents

\section{\label{sec:introduction}Introduction}
Many quantum algorithms are initially designed without the inclusion of auxiliary qubits. However, due to the limited coherence time of current quantum hardware, finding an adjustable trade-off between circuit depth and width (i.e., the total number of required qubits) is essential for optimizing performance on specific quantum devices. The most prominent way for reducing circuit depth involves the introduction of mid-circuit measurements and feedforward operations (quantum operations conditioned on measurement outcomes) \cite{piroli2021quantum, lu2022measurement, smith2024constant}. This approach is often referred as an adaptive circuit \cite{foss2023experimental} or dynamic circuit \cite{baumer2024efficient}. There have been several studies successfully employing this approach, including the preparation of matrix product states \cite{malz2024preparation, smith2024constant}, exploring topological order \cite{bravyi2022adaptive, tantivasadakarn2023hierarchy}, Hamiltonian simulation \cite{boyd2023low}, and implementing non-Clifford gates \cite{gidney2021cccz,kim2024resource}. 

An inevitable challenge associated with this approach is that incorporating measurements into a quantum circuit introduces probabilistic outcomes due to the principle of quantum mechanics. Several strategies have been proposed to address this issue. One approach involves leveraging the symmetry of the target state \cite{smith2023deterministic, smith2024constant}, while another focuses on designing circuits where any measurement outcomes can be corrected \cite{piroli2024approximating}. An alternative solution builds on the fact that quantum gate teleportation generates Pauli errors and that Clifford operations commute with Pauli gates. This concept, initially proposed in the context of measurement-based quantum computing \cite{gottesman1999demonstrating,jozsa2006introduction}, allows the construction of a quantum fan-out gate with a constant-depth quantum circuit with measurements and feedforward \cite{low2024trading, buhrman2024state}. Because it was well-established that various quantum logical operations, such as quantum AND and OR gates, can be derived from the quantum fan-out gate \cite{hoyer2005quantum, takahashi2016collapse}, Buhrman and collaborators \cite{buhrman2024state} proposed a constant-depth quantum circuit which prepares a Dicke state by exploiting those logical operations along with measurements, feedforward, and a number of auxiliary qubits that scales quadratically with the system size.

The most effective way to leverage this advantage in operating quantum logic gates may include preparing specific quantum states with known analytic forms. Unlike other quantum algorithms like Hamiltonian evolution, which involves complex operations, having an exact analytic form of the target quantum state provides a straightforward answer: ``What logical operations are needed to prepare this state?" We emphasize that even when the desired quantum state is known, the preparation of a well-characterized quantum state is a critical factor in the success of quantum computation for quantum simulation. For example, in finding the ground state via quantum phase estimation and its variants \cite{kitaev1995quantum, abrams1999quantum, dong2022ground, wang2023quantum}, the overlap between the input state and the ground state must be maximized to make the number of repetitions of quantum phase estimation manageable. In that sense, it is essential to prepare well-established, physically motivated ansatz such as the unitary coupled cluster ansatz \cite{romero2018strategies}, sums of Slater determinants \cite{fomichev2023initial}, and eigenstates of a Hamiltonian \cite{piroli2024approximating}. Similarly, in quantum adiabatic evolution \cite{albash2018adiabatic, ebadi2021quantum} where the initial Hamiltonian significantly differs from the final Hamiltonian, the ability to prepare a desired initial state is crucial, as the choice of the Hamiltonian path strongly influences the performance of adiabatic evolution \cite{vcepaite2023counterdiabatic, van2023adiabatic, mc2024towards}. Moreover, quantum state preparation has broad applications in quantum information, often called quantum access models \cite{zhang2022quantum, zhang2024circuit}.
% 이렇게 줄이는 방법이 있는데, 그 중에 logical이 대두되었다. 이는 무엇인지 알고 있어야 좋으니까, initial state prep에 대응된다.

In this study, we propose novel parallelization techniques that leverage constant-depth quantum logic gates inspired by Ref. \cite{buhrman2024state}. These techniques are applied to quantum state preparation to minimize quantum circuit depth. Furthermore, we emphasize that our parallelization techniques can be selectively applied to optimize the balance between depth and width. First, we demonstrate that the transformation between two quantum states, where each state is a superposition of an equal number of non-negative integers, can be performed using a constant-depth quantum circuit. This transformation allows us to prepare $d$-sums of $n$-qubit sparse quantum state with a quantum circuit depth of $O(\log d)$ and a width of $O(dn \log n)$, utilizing measurements and feedforward. Additionally, by combining this constant-depth transformation with other logical operations, we construct a constant-depth quantum circuit capable of preparing single (anti)symmetric states. These states are of great importance for the initial stages of quantum simulation in the first quantization framework. Moreover, we propose the preparation of higher-quality states like $d$-sums of Slater determinants, using quantum circuits of depth $O(\log d)$ and width $\tilde{O}(\eta^2 d^2\log N)$ or of depth $O(d)$ and width $\tilde{O}(\eta^2(\eta\log\eta+\log N)+\log d)$.

In addition to the transformation between two quantum states with different non-negative integer sets, we utilize a series of controlled pairwise commuting gates within a constant-depth quantum circuit  \cite{hoyer2005quantum}. This technique serves as a powerful tool for encoding phases into a quantum state. Using this method, we demonstrate that a Bethe wavefunction \cite{sahu2024fractal}, characterized by its extensive degrees of freedom in phase, can be prepared in a constant-depth quantum circuit with measurements and feedforward. This Bethe wavefunction defined in \cite{sahu2024fractal} is the generalization of conventional wavefunction under the coordinate Bethe ansatz, so it is not restricted to eigenstates of integrable Hamiltonians. The Bethe ansatz, first introduced by Hans Bethe \cite{bethe1931theorie}, is an analytic solution for many interacting one-dimensional Hamiltonians. The coordinate Bethe ansatz refers to the original form among various Bethe ansatz \cite{giamarchi2003quantum, van2016introduction,slavnov2018algebraic}. Due to the computational difficulty in calculating higher-order correlations or other complex observables on classical computers, preparation of Bethe wavefunction (or fixed Hamming weight encoder) on a quantum computer, which include the coordinate Bethe ansatz eigenstates have attracted a lot of attention \cite{gard2020efficient, van2021preparing, li2022bethe, raveh2024deterministic, sahu2024fractal, farias2025quantum}. There has also been extensive research in the realization of algebraic Bethe ansatz on a quantum computer \cite{sopena2022algebraic,ruiz2024bethe1, ruiz2024bethe2, ruiz2024efficient}. Our preparation of the Bethe wavefunction employs a series of controlled pairwise commuting gates for its phase encoding and a symmetric quantum state, where both are achievable using a constant-depth quantum circuit. For the Bethe wavefunction with $N$ sites and $\eta$ particles, our constant-depth preparation of Bethe wavefunction requires a circuit width of $\tilde{O}(N^2\log N)$, with a success probability of $O(1/M!)$. A summary of our results can be found in Table. \ref{tab:summary}.

%In comparison, previous algorithms for \textcolor{blue}{Bethe wavefunction} preparation required an $O(N)$ quantum circuit depth with $O(\eta^2)$ auxiliary qubits, where the success probability $O(1/\eta !)$ is equivalent to ours \cite{van2021preparing}. Additionally, we also mention that a deterministic algorithm for a coordinate Bethe ansatz preparation requires a circuit depth of $O(\binom{N}{\eta})$ \cite{raveh2024deterministic}.

The remainder of this paper is organized as follows. First, we introduce various notations that will be used throughout this paper in Sec. \ref{sec:preliminaries}. In Sec. \ref{sec:routines_for_parallelization}, we introduce several parallelization techniques aimed at reducing a quantum circuit's depth and apply them to prepare a sparse quantum state. In Sec. \ref{sec:antisymmetric_quantum_states}, we propose algorithms for preparing single (anti)symmetric state and sums of Slater determinants, leveraging the techniques from Sec. \ref{sec:routines_for_parallelization}. Furthermore, we propose a constant-depth quantum circuit in Sec. \ref{sec:bethe_ansatz} to prepare the Bethe wavefunction. Finally, in Sec. \ref{sec:conclusion}, we summarize our findings and discuss the broader implications of our work.

\begin{table*}[t]
\begin{ruledtabular}
\begin{tabular}{cccccc}
Target state & Reference & Circuit size & Circuit depth & Circuit width & Remarks \\ \hline 

 & \cite{luo2024circuit} & $O(dn/\log d)$ & $O(\log dn)$ & $O(dn/\log d)$ & / \\ [1ex]
Sparse quantum state & Thm. \ref{thm:sparse} & / & $O(\log d)$ & $O(dn \log n)$ & Adaptive circuit \\ [1ex]
  & \cite{zi2025constant} & / & $O(1)$ & $O(d^2\log n)$ & Adaptive circuit \\ \hline
 & \cite{berry2018improved,fomichev2023initial}$^{*}$ & $O(d\log d)^{**}$ & / & $O(\eta\log N+\log d)$ & $^{**}$\#Toffoli qubits \\ [1ex]
\multirow{2}{*}{\makecell{Sums of Slater\\determinants}} & Thm. \ref{thm:sos} & / & $O(d)$ & $\tilde{O}(\eta^2(\eta\log \eta+\log N)+\log d)$ & Adaptive circuit \\ [1ex]
  & \cite{berry2018improved}, Thm. \ref{thm:sparse}$^{***}$ & / & $\tilde{O}(\log^2\eta)$ & $\tilde{O}(\eta d \log N \log \eta)$ & / \\ [1ex]
    & Thm. \ref{thm:sos} & / & $O(\log d)$ & $\tilde{O}(\eta^2 d^2\log N)$ & Adaptive circuit \\  \hline
    
 & \cite{van2021preparing} & $O(L)$ & $O(L)$ & $O(L)$ & Success Prob. $O(1/M!)$ \\ [1.5ex]
 \multirow{3}{*}{\makecell{Bethe\\wavefunction}}& \cite{farias2025quantum} & $O(\binom{L}{M})^{****}$ & $O(\binom{L}{M})$ & $O(L)$ & $^{****}$\#CNOT gates \\ [1.5ex]
  & \cite{sahu2024fractal} & / & $O(\log L)$ & $O(L)$ & \makecell{Adaptive circuit,\\Classical cost $O(2^{4M})$}\\ [2.5ex]
    & Thm. \ref{thm:bethe} & / & $O(1)$ & $O(L^2\log L)$ & \makecell{Success Prob. $O(1/M!)$,\\Infidelity $O(1/2^{L^L})$ }
  
\end{tabular}
\end{ruledtabular}
\caption{Summary of several quantum state preparation algorithms, including our results. For sparse quantum state preparation, we list only those algorithms which aim to minimize quantum circuit depth, even at the expense of circuit width, which is our main purpose. “Adaptive circuit” in the remarks column denotes the fact that midcircuit measurement and feedforward operations are used in the given algorithm. / indicates that the item is either not specified in the original reference or is unclear. $^{*}$ denotes the algorithm that are newly proposed in this work by applying algorithms from Refs. \cite{berry2018improved,fomichev2023initial}. $^{**}$ indicates the circuit size was measured by the number of Toﬀoli gates. $^{***}$ denotes circuit complexity newly derived in this paper, based on techniques in Ref. \cite{berry2018improved}. $^{****}$ refers to circuit size measured by the number of CNOT gates.}
\label{tab:summary}
\end{table*}

\section{Preliminaries}\label{sec:preliminaries}
In this work, we demonstrate the preparation of three distinct quantum states: a sparse quantum state, linear combinations of Slater determinants, and a Bethe wavefunction. Due to the differing contexts in which these states arise, we employ a variety of notational conventions throughout the paper. For clarity, we provide a summary of these notations below, even though they are reintroduced in the sections where they appear.

\begin{itemize}
    \item $|x\rangle$: a quantum state representing an integer $x$ under the binary basis.
    \item $|e_x\rangle$: a quantum state representing an integer $x$ under the unary basis.
    \item $n$: the number of qubits for general quantum states, especially for sparse quantum states.
    \item $d$: the sparsity of sparse quantum state, or the number of considered energy level combinations in sums of Slater determinants.
    \item $N$: the number of considered energy levels in a (anti)symmetric state, and sums of Slater determinants.
    \item $\eta$: the number of particles in a (anti)symmetric state, and sums of Slater determinants.
    \item $S_\eta$: a permutation group of size $\eta$.
    \item $\pi(\sigma)$: a parity of a permutation $\sigma$.
    \item $M$: the number of particles of a Bethe wavefunction.
    \item $L$: the number of sites of a Bethe wavefunction.
    \item $\bm{\theta}$: a $M \times M$ real scattering matrix.
    \item $\vec{k}$: a real quasimomentum vector.
\end{itemize}

\section{Routines for parallelization}\label{sec:routines_for_parallelization}
\begin{table*}[tbh!]
\begin{ruledtabular}
\begin{tabular}{ccc}
 Operation&Operation on $n$-qubit quantum state&Width\\ \hline 
 $Fan$-$out$ & $\ket{x}_1 \ket{y_0}_1\dots\ket{y_{n-1}}_1\longrightarrow \ket{x}_1\ket{y_0\oplus x}_1\dots\ket{y_{n-1}\oplus x}_1$ & $O(n)$      \\ [2ex]
 $OR$ & $\ket{y_0}_1\dots\ket{y_{n-1}}_1\ket{x}_1\longrightarrow \ket{y_0}_1\dots\ket{y_{n-1}}_1\ket{\rm{OR}_n(y)\oplus x}_1$ & $O(n\log n)$\\ [2ex]
    $Equal_i$ & $\ket{x}_{n}\ket{y_1}_1\longrightarrow \ket{x}_n\ket{y_1\oplus \mathds{1}_{x=i}}_1=
  \begin{cases}
      \ket{x}_n\ket{y_1\oplus 1}_1 & \quad \text{if } x=i\\ 
      \ket{x}_n\ket{y_1}_1 & \quad \text{if } x\neq i\\
  \end{cases}$ & $O(n\log n)$ \\ [4ex]
 $AND$ & $\ket{y_0}_1\dots\ket{y_{n-1}}_1\ket{x}_1\longrightarrow \ket{y_0}_1\dots\ket{y_{n-1}}_1\ket{\rm{AND}_n(y)\oplus x}_1$ & $O(n\log n)$ \\ [2ex]
  $Hammingweight$ & $\ket{x}_n\ket{0}_{\log n}\longrightarrow \ket{x}_n\ket{|x|}_{\log n}$ & $O(n\log n)$   \\ [2ex]
  $Greatherthan$ & $\ket{x}_n\ket{y}_n\ket{0}_1\longrightarrow \ket{x}_n\ket{y}_n\ket{\mathds{1}_{x>y}}_1=
  \begin{cases}
      \ket{x}_n\ket{y}_n\ket{1}_1 & \quad \text{if } x>y\\
      \ket{x}_n\ket{y}_n\ket{0}_1 & \quad \text{if } x\leq y\\
  \end{cases}$ & $O(n^2)$\\ [4ex]
  $Permutation(\sigma)$ & $\ket{y_0}_1\dots\ket{y_{n-1}}_1\longrightarrow \ket{y_{\sigma(0)}}_1\dots\ket{y_{\sigma(n-1)}}_1$ & $O(n^2)$

\end{tabular}
\end{ruledtabular}
\caption{Lists of quantum logical operations that can be implemented in a constant-depth quantum circuit with measurements and feedforward. The subscript $N$ in the quantum state $\ket{y}_n$ indicates that this state consists of $n$ qubits. For the $OR$($AND$) gate, $\mathrm{OR}_n(y)$($\mathrm{AND}_n(y)$) refers to the result bit OR(AND) operation for N-bit $y_0 y_1 \dots y_{n-1}$. In addition, we use the notation $|x|$ to denote the Hamming weight of $n$-bit $x$.} Additional operations are detailed in Ref. \cite{hoyer2005quantum, takahashi2016collapse,buhrman2024state}.\label{tab:operations}
\end{table*}
Various measures can be used to quantify the complexity of quantum circuits, such as the number of auxiliary qubits, circuit depth, and the number of $T$ gates. Given the diverse characteristics and capabilities of different quantum hardware, exploring trade-offs among these metrics is an active research area. In this work, we focus on reducing circuit depth, which is defined as the number of layers of quantum gates that cannot be executed simultaneously. To achieve this, we leverage well-established trade-offs between circuit depth and width via measurements and feedforward. Notably, this approach enables the circuit depth of certain computations to become independent of the system size \cite{hoyer2005quantum, takahashi2016collapse}. It is worth mentioning that mid-circuit measurements and feedforward are essential features of fault-tolerant quantum hardware, thus they are reasonable, non-restrictive requirements. In this section, we introduce specific parallelizable computations that utilize these techniques: (1) the transformation between two quantum states, where each state is a superposition of an equal number of non-negative integers, and (2) multi-controlled pairwise commuting gates. Additionally, all logarithmic expressions in this paper are taken to base 2, and the circuit depth is calculated under the assumption that each layer of the quantum circuit consists of single-qubit gates and CNOT gates.

% APS 논문이니까 approximating ~~ 처럼 complexity 논의 깊게는 안하는게 좋을듯
\subsection{The hybrid circuit model in consideration}
Since this study involves both quantum circuits and classical circuits that govern feedforward operations, we examine the structure of hybrid circuits in this study. Specifically, we focus on circuits composed of alternating layers of quantum and classical operations \cite{buhrman2024state}. After each quantum layer, mid-circuit measurements are performed, and feedforward operations are determined by subsequent classical computation that takes these measurement results as inputs. In this paper, we will refer to this hybrid circuit model as a ``quantum circuit with measurements and feedforward." If the depth of the quantum circuit is independent of the system size, we will refer to this circuit as a ``constant-depth quantum circuit with measurements and feedforward"-or simply, a ``constant-depth quantum circuit". All classical computation layers in this hybrid circuit, as considered in our study, have a depth of $O(\log n)$, and a circuit size of $O(\mathrm{poly} (n))$ for $n$-bit calculation \cite{buhrman2024state}. % 굳이 저얘길 여기서 꺼내야하는가
% Fig1을 아예 그냥 수학 회로 빼버리고 전체적인 변환을 카툰식으로. 간단하게

\subsection{Clifford ladder circuit}
One approach to reducing quantum circuit depth is to recursively apply quantum gate teleportation \cite{jozsa2006introduction, gottesman1999demonstrating}. However, since quantum gate teleportation requires quantum correction at each step based on the measurement outcomes, the probability of success decreases exponentially with the number of teleportations. This issue can be solved when we only teleport Clifford gates \cite{buhrman2024state}. Because Clifford gates commute with Pauli operators, the correction process can be deferred and parallelized within a ladder circuit structure. The definition of the Clifford ladder circuit and its parallelization is provided below.
\begin{definition}{\textnormal{(Clifford-ladder circuit \cite{buhrman2024state})}}
Let $n$ be the number of qubits under consideration. A Clifford-ladder circuit is defined as
\begin{equation}
    C_{\rm{ladder}}:=\prod_{i=0}^{n-2} U^{(i)}_{i,i+1},
\end{equation}
where the circuit has a depth of $n-1$. The set of gates $\{U^i\}^{n-2}_{i=0}$ consists of $n-1$ two-qubit Clifford gates, and $U^{(i)}_{i,i+1}$ indicates that the gate $U^{(i)}$ acts on the $i$-th and $(i+1)$-th qubits.
\end{definition}
\begin{lemma}\label{lem:clifford}
Any $n$-qubit Clifford-ladder circuit can be implemented in a constant-depth quantum circuit of width $O(n)$ using measurements and feedforward.
\end{lemma}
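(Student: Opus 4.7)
The plan is to apply every Clifford gate in the ladder through parallel gate teleportation and defer the resulting Pauli byproducts to a single classical post-processing step. For any two-qubit Clifford $U$, one can prepare an offline resource state of the form $(I_{12}\otimes U_{34})\bigl(|\Phi^{+}\rangle_{13}|\Phi^{+}\rangle_{24}\bigr)$ on four ancilla qubits. A pair of Bell measurements between the two input qubits and qubits $1,2$ of this resource state teleports the input onto qubits $3,4$ with $U$ applied, up to a known two-qubit Pauli determined by the measurement outcomes. This subroutine takes constant quantum depth and $O(1)$ ancillas per gate.

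First, I would prepare all $n-1$ resource states in parallel, one for each $U^{(i)}$, costing $O(n)$ ancillas and constant depth, since the resource states act on disjoint qubits. Then I would perform one simultaneous layer of Bell measurements that chains the resource states together: the ``input" qubits of the resource state for $U^{(0)}$ pair with the actual input qubits $0$ and $1$, and for $i\ge 1$ the ``input" qubits of the resource state for $U^{(i)}$ pair with the two ``output" qubits of the previous resource state that correspond to wires $i$ and $i+1$. Because the pairings are disjoint, all Bell measurements execute in constant depth.

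The classical layer must combine the Pauli byproducts of every teleportation into a single net Pauli to apply at the end. A Pauli produced at stage $i$ must be conjugated past $U^{(j)}$ for every $j>i$; since each $U^{(j)}$ is Clifford, such conjugation is an $\mathbb{F}_2$-symplectic map on the Pauli labels, precomputable from the circuit description. Composing these $n-1$ maps and summing their contributions is an associative prefix computation that can be done in $O(\log n)$ classical depth and polynomial size, which is exactly the budget the hybrid-circuit model allows. Applying the resulting net Pauli is then a single layer of single-qubit gates.

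The main obstacle will be arranging the overlapping teleportations so that the simultaneous Bell measurements act on physically disjoint qubits — each middle wire of the ladder is the output of one teleportation and the input of the next — and verifying that the prefix computation of composed Clifford-on-Pauli conjugations indeed fits within the $O(\log n)$ classical depth available. Once the wiring is fixed, the total quantum depth is a small constant (resource preparation, one layer of Bell measurements, one layer of Pauli correction) and the width is $O(n)$, yielding the claim.
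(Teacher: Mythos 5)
Your teleportation-based argument is essentially the proof the paper relies on: the paper simply cites Lemma 3.8 of Buhrman et al., whose construction is exactly this parallel gate teleportation of the Clifford ladder with deferred Pauli byproducts composed by a logarithmic-depth classical prefix computation. One small wiring slip to note: since $U^{(i)}$ acts on wires $i,i+1$ while $U^{(i-1)}$ acts on wires $i-1,i$, only the wire-$i$ input of the $i$-th resource state chains to the previous teleportation's output, whereas the wire-$(i+1)$ input pairs with the fresh input qubit $i+1$; this does not affect the constant depth, the $O(n)$ width, or the conclusion.
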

\begin{proof}
See Lemma 3.8 in Ref. \cite{buhrman2024state}.    
\end{proof}
We note that all auxiliary qubits used in our study, particularly those employed to create Bell pairs in $C_{\rm{ladder}}$, can be reused in subsequent steps of the algorithm after measurements. 

From Lemma \ref{lem:clifford}, we can implement the quantum fan-out gate in a constant-depth quantum circuit, where the quantum fan-out gate is defined as
\begin{equation}
    \ket{x}\ket{y_1}\dots\ket{y_n}\longrightarrow \ket{x}\ket{y_1\oplus x}\dots\ket{y_n\oplus x},
\end{equation}
where $x,y_i\in \{0,1\}$. In Ref. \cite{baumer2024measurement}, this measurement-based quantum fan-out gate was suggested to be more scalable than its unitary implementation. Moreover, since the $\rm{SWAP}$ gate is a Clifford gate, any $\rm{SWAP}$ operation between far-reaching qubits can also be implemented in a constant-depth quantum circuit with measurements and feedforward. In fact, numerous quantum logic operations have been shown to be implementable in constant-depth quantum circuits if a constant-depth quantum circuit for the quantum fan-out gate is available. These operations belong to the complexity class $\rm{QNC}_{\rm{f}}^0$ \cite{hoyer2005quantum}. We summarize logical operations that can be applied by constant-depth quantum circuits in Table. \ref{tab:operations}. 

\subsection{Unary encoding}
\begin{figure}[t!]  % 여기 그림에 T도 넣고, n, N 반영하기
    \centering
    \subfloat[\label{subfig:uncompress}]{
    \includegraphics[width=1.0\linewidth]{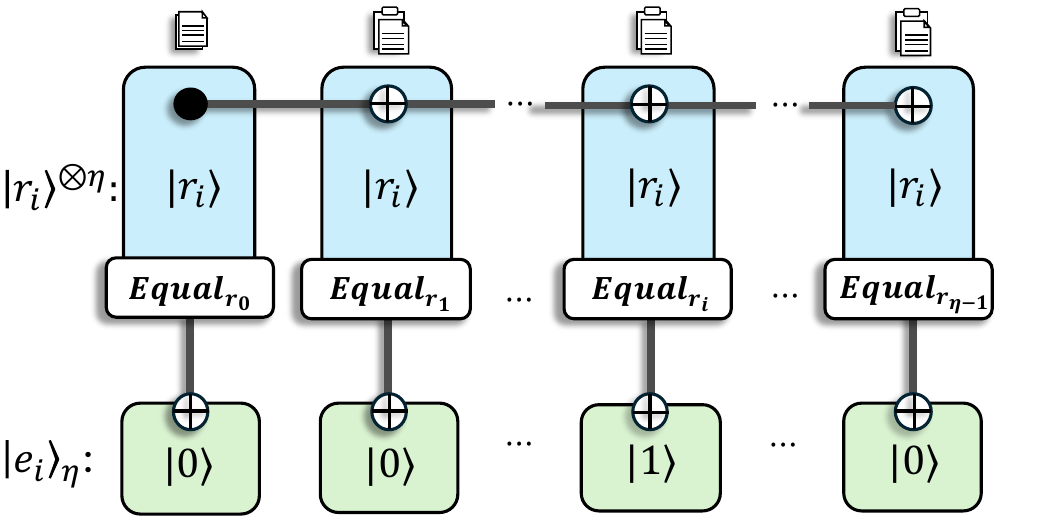}
    }\vfill
    \subfloat[\label{subfig:whole_scheme}]{
    \includegraphics[width=1.0\linewidth]{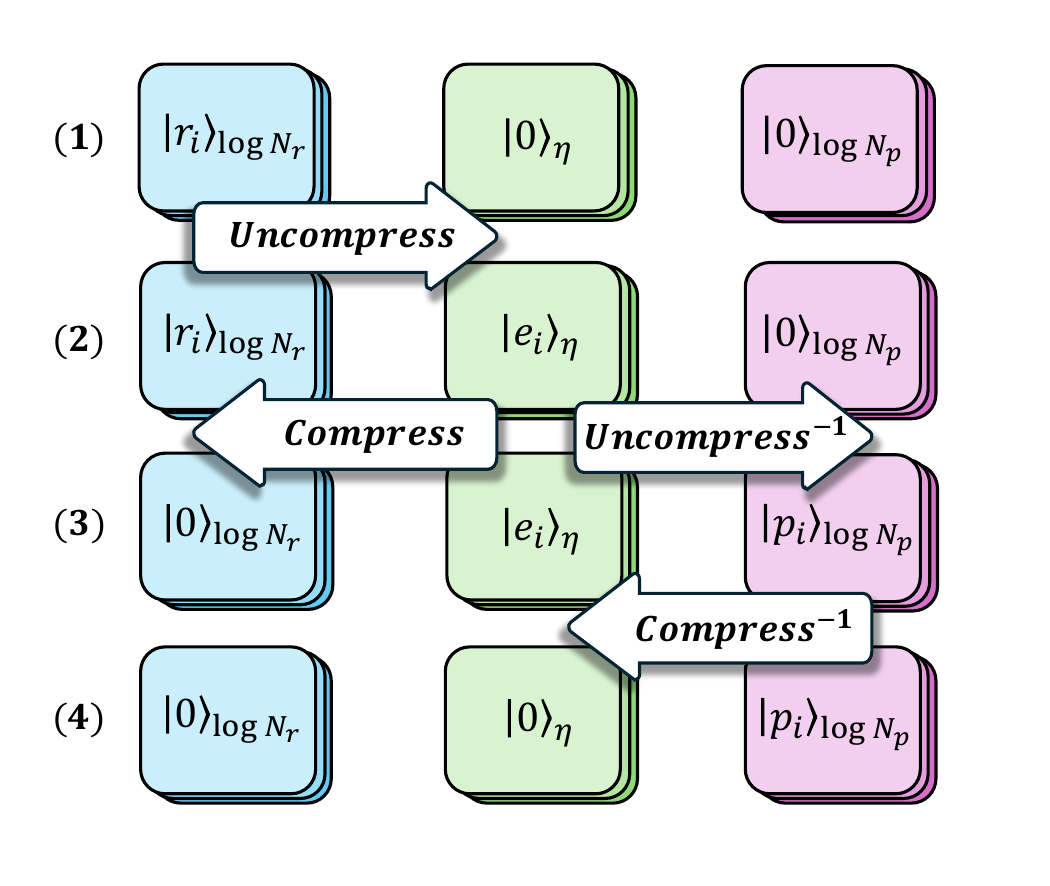}
    }
    \caption{The blue rectangles represent the $\{r_i\}_{i=0}^{\eta-1}$ registers, the green rectangle represents the registers, and the pink rectangles represent $\{p_i\}_{i=0}^{\eta-1}$ registers. (a) The quantum circuit consists of \textit{Equal} gates during \textit{Uncompress} operation on $\ket{r_i}$. The quantum fan-out gates duplicate the quantum state $\ket{r_i}$. (b) The entire process of transforming $\ket{r_i}_{\log N_r}$ into $\ket{p_i}_{\log N_p}$ via unary encoding as an intermediate step. The quantum circuit proceeds sequentially from (1) to (4). The operators $Uncompress^{-1}$ and $Compress^{-1}$ denote the inverse of $Uncompress$ and $Compress$, respectively. Note that this process can be simultaneously applied to the superpositioned state, $\sum_{i=0}^{\eta-1}\ket{r_i} \longrightarrow\sum_{i=0}^{\eta-1}\ket{p_i}$.}
    \label{fig:unary}
\end{figure}
With the fact that the quantum fan-out gate can be applied in a constant-depth quantum circuit, additional gates beyond the Clifford ladder can also be implemented in parallel. One example is a series of controlled pairwise commuting gates, which will be discussed in the next section \cite{hoyer2005quantum}. Before proceeding, it is useful to introduce unary encoding (also known as one-hot encoding) to extend the applicability of controlled pairwise commuting gates. The unary encoding provides the advantage of easier access to information at the cost of increased qubit usage, and it has gathered a lot of attraction in the context of quantum simulation, and quantum machine learning \cite{gard2020efficient, weigold2020data, ramos2021quantum, johri2021nearest, kerenidis2021classical,kerenidis2022quantum, zhang2023efficient, farias2025quantum}. Unlike binary encoding, unary encoding represents information about the qubit's position. For example, consider a system with four total data, and we want to encode the number two in the quantum state. In binary encoding, the state is represented as $\ket{2}_{\log 4}:=\ket{10}_{\log 4}$. In contrast, in unary encoding, the state is represented as $\ket{e_2}_{4}:=\ket{0100}_{4}$, where the state $\ket{e_i}$ denotes the unary encoding of $\ket{i}$. We note that the subscript $n$ in $\ket{x}_n$ indicates the number of qubits, and this notation will be used throughout the paper to ensure clarity. 

Another reason for introducing unary encoding is that the transformation between binary encoding and unary encoding can be achieved by a constant-depth quantum circuit with measurements and feedforward. This process is reversible, as the amount of information contained in the state is independent of the encoding method. Inspired by Ref. \cite{buhrman2024state}, we refer to the transformation from binary encoding to unary encoding as \textit{Uncompress}, and the reverse transformation as \textit{Compress}. Unlike other proofs in this paper, we propose a detailed exposition of these proofs in the main text, as they constitute the core of our parallelization techniques. We explicitly denote the number of qubits to avoid ambiguity by attaching subscripts. In addition, we assume that $N$ is a power of $2$. If this is not the case, we define the number of qubits required to represent integers from $0$ to $N-1$ as $\lceil \log N \rceil$. We also note that the notation $\tilde{O}$ indicates that coefficients of terms equal to or smaller than $\log\log$ are omitted.
% binary number 표현하려면 0부터 시작해야한다

\begin{lemma}\label{lem:uncompress}{\textit{(Uncompress)}}
    For two positive integers $\eta$ and $N$ satisfying $\eta\leq 2^{\lceil\log N\rceil}$, let $\{r_i\}_{i=0}^{\eta-1}$ be a set of non-negative integers where $0\leq r_0<r_1<\dots<r_{\eta-1}< N-1$. Then, the Uncompress operation refers to 
    \begin{equation}
        \sum_{i=0}^{\eta-1} \alpha_i \ket{r_i}_{\log N} \ket{0}_{\eta} \longrightarrow \sum_{i=0}^{\eta-1} \alpha_i\ket{r_i}_{\log N}\ket{e_{i}}_{\eta},
    \end{equation}
    where $\alpha_i\in \mathbb{C}$ satisfy $\sum_{i=0}^{\eta-1}|\alpha_i|^2=1$. This operation can be implemented in a constant-depth quantum circuit of width $\tilde{O}(\eta\log N)$.
\end{lemma}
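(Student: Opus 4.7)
The plan is to exploit the fact that the values $\{r_j\}_{j=0}^{\eta-1}$ are known classically, so that $\ket{r_i}\mapsto\ket{e_i}$ can be realized by $\eta$ parallel ``is this $r_j$?'' tests, each writing a single bit of the unary output. This is a three-step fan-out / compute / uncompute pattern in the spirit of Fig.~\ref{fig:unary}(a), built entirely from the constant-depth primitives already listed in Table~\ref{tab:operations}.

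Concretely, I would first apply, bitwise and in parallel, a one-to-many fan-out to produce $\eta$ coherent copies of $\ket{r_i}_{\log N}$; by Lemma~\ref{lem:clifford} each single-qubit fan-out is a Clifford-ladder circuit, so the whole step is constant depth and uses $O(\eta\log N)$ auxiliary qubits, yielding $\sum_i \alpha_i \ket{r_i}_{\log N}^{\otimes \eta}\ket{0}_\eta$. Next, for every $j\in\{0,\dots,\eta-1\}$ I would apply $Equal_{r_j}$ on the $j$-th copy and on the $j$-th qubit of the unary register. Each such gate is constant-depth with width $O(\log N\log\log N)$ by Table~\ref{tab:operations}, and since the $\eta$ gates act on pairwise disjoint qubits they execute in parallel without increasing depth. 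Because the classical values $\{r_j\}$ are pairwise distinct, exactly one $Equal_{r_j}$ fires in every branch of the superposition, so the auxiliary register becomes $\ket{e_i}_\eta$. Finally, I would uncompute the $\eta-1$ redundant copies by running the fan-out in reverse (it is self-inverse), leaving the target state $\sum_i \alpha_i \ket{r_i}_{\log N}\ket{e_i}_\eta$ and freeing all copy ancillas. Summing widths gives $\eta\log N$ for the copies, $\eta$ for the unary output, and $O(\eta\log N\log\log N)$ for the internal $Equal$-gate ancillas, totalling $\tilde{O}(\eta\log N)$, matching the claim.

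The main obstacle is largely bookkeeping rather than conceptual: I must argue that the $\eta$ parallel $Equal_{r_j}$ invocations can each be allocated their own internal ancilla pool so that the overall depth stays constant, and that the Clifford-ladder fan-outs on the $\log N$ different input bits can be stitched together simultaneously without serialization. Both follow from Lemma~\ref{lem:clifford} together with the freedom to choose disjoint ancilla pools for different gates. Correctness itself is immediate from the distinctness of $\{r_j\}$, which guarantees that the $\eta$ parallel equality tests encode the rank $i$ rather than the value $r_i$, and linearity extends the argument from a single basis state $\ket{r_i}$ to the full superposition.
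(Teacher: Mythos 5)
Your proposal is correct and follows essentially the same route as the paper's own proof: fan-out to create $\eta$ coherent copies of the binary register, apply the $\eta$ gates $Equal_{r_j}$ in parallel on disjoint registers to write the rank into the unary register, then uncompute the copies, giving width $\tilde{O}(\eta\log N)$ at constant depth. The bookkeeping points you flag (disjoint ancilla pools and distinctness of the $r_j$) are exactly the implicit ingredients of the paper's argument, so nothing is missing.
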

\begin{proof}
    Note that we omit the amplitude factor in the proof, as only the bit values are altered. We refer to qubits encoded in binary as the binary register, and those encoded in unary as the unary register. To simultaneously access each number $r_i$ in the unary register, we need several copies of the binary register. This duplication can be achieved in a constant-depth quantum circuit using the quantum fan-out gate with $O(\log N)$ auxiliary qubits. Since $\eta$ copies of the binary register are needed, the operation requires $O(\eta\log N)$ auxiliary qubits. The state then transforms as
    \begin{equation*}
         \sum_{i=0}^{\eta-1} \ket{r_i}_{\log N}\ket{0}^{\otimes \eta-1}_{\log N}\ket{0}_\eta \longrightarrow \sum_{i=0}^{\eta-1} \ket{r_i}_{\log N}^{\otimes \eta}\ket{0}_\eta.
    \end{equation*}
    Next, apply the $Equal_{r_i}$ operation in the Table. \ref{tab:operations}, where the control qubits are the $i$-th binary register and the target qubit is the $i$-th qubit of the unary register $\ket{0}_\eta$. By simultaneously applying $Equal_{r_i}$ for $0\leq i \leq \eta-1$, we get the state
    \begin{equation*}
        \sum_{i=0}^{\eta-1} \ket{r_i}_{\log N}^{\otimes \eta}\ket{0}_\eta \longrightarrow \sum_{i=0}^{\eta-1} \ket{r_i}^{\otimes \eta}\ket{e_i}_\eta,
    \end{equation*}
    and this process is illustrated in Fig. \ref{subfig:uncompress}. According to the Table. \ref{tab:operations}, $\eta$ $Equal_{r_i}$ gates require $O(\eta \log N \log \log N)$ quantum circuit width. Finally, the copied binary registers can be uncomputed by reapplying the quantum fan-out gate, resulting in the desired state,
    \begin{equation*}
        \sum_{i=0}^{\eta-1} \ket{r_i}_{\log N}^{\otimes \eta}\ket{e_i}_\eta \longrightarrow \sum_{i=0}^{\eta-1} \ket{r_i}_{\log N}\ket{0}_{\log N}^{\otimes \eta-1}\ket{e_i}_\eta.
    \end{equation*}
    The dominant quantum circuit width $O(\eta \log N \log \log N)$ comes from operating $Equal$ gates. Given that $\log\log \eta$ is negligible compared to other things, we disregard it and express the complexity as $\tilde{O}(\eta\log N)$.
\end{proof}

\begin{lemma}\label{lem:compress}{(Compress)}
    For two positive integers $\eta$ and $N$, satisfying $\eta\leq 2^{\lceil\log N\rceil}$, let $\{r_i\}_{i=0}^{\eta-1}$ be a set of non-negative integers where $0\leq r_0<r_1<\dots<r_{\eta-1}\leq N-1$. Then, the Compress operation
    \begin{equation}
        \sum_{i=0}^{\eta-1} \alpha_i \ket{r_i}_{\log N} \ket{e_{i}}_\eta \longrightarrow \alpha_i \ket{0}_{\log N}\ket{e_{i}}_\eta,
    \end{equation}
    where $\alpha_i\in \mathbb{C}$ satisfy $\sum_{i=0}^{\eta-1}|\alpha_i|^2=1$. The Compress operation can be implemented in a constant-depth quantum circuit of width $O(\eta\log N)$.
\end{lemma}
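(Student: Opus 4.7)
The plan is to view Compress as a kind of classical lookup: since the set $\{r_i\}$ is fixed and known classically, once the index $i$ (stored as $|e_i\rangle$) is made available in parallel to every binary qubit, one can XOR the bit $(r_i)_b$ into the $b$-th binary qubit to erase it. This is dual to Uncompress in role but requires a different gadget -- we avoid the $\mathit{Equal}$ gates (which cost $O(\log N\log\log N)$ width each) and use only parity gates, which is precisely what yields the sharper bound $O(\eta\log N)$ rather than $\tilde{O}(\eta\log N)$.

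Concretely, I would proceed in three parallelizable stages. First, duplicate the unary register $|e_i\rangle_\eta$ a total of $\log N$ times using the fan-out gate of Table~\ref{tab:operations}; by Lemma~\ref{lem:clifford} this is constant-depth and costs $O(\eta\log N)$ width. Second, for each binary bit position $b\in\{0,\dots,\log N-1\}$, precompute classically the set $S_b:=\{i':(r_{i'})_b=1\}$, and in parallel over $b$ apply a fan-in XOR (parity) from the qubits of the $b$-th unary copy indexed by $S_b$ into the $b$-th binary qubit. Because $|e_i\rangle$ contains a single $1$ at position $i$, the parity evaluates to $(r_i)_b$, and XOR-ing this value into a qubit already holding $(r_i)_b$ resets it to $0$. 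The fan-in XOR is the Hadamard conjugate of fan-out, so it too is constant-depth; the width of the $b$-th parity is $O(|S_b|)$, and summing over $b$ gives at most $O(\sum_b|S_b|)=O(\eta\log N)$. Third, uncompute the fan-out to return the $\log N-1$ extra unary copies to $|0\rangle_\eta$.

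The total depth is constant and the total width is dominated by the $\log N$ fresh copies of the $\eta$-qubit unary register, yielding the advertised $O(\eta\log N)$. Note that the amplitudes $\alpha_i$ are untouched throughout, since every operation is a permutation on computational basis states conditioned on the two registers, so linearity extends the proof from a single $|r_i\rangle|e_i\rangle$ to the full superposition.

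The main obstacle is ensuring the $\log N$ parity operations can run simultaneously without colliding on the same auxiliary wires; this is precisely why one pays for $\log N$ separate copies of the unary register rather than reusing a single one. A secondary technical point is justifying that an $\eta$-input fan-in XOR is constant-depth with measurements and feedforward, but this follows immediately from the constant-depth fan-out of Lemma~\ref{lem:clifford} via $H^{\otimes \eta+1}$ conjugation.
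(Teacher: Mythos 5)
Your proof is correct, but it takes a genuinely different (Hadamard-dual) route from the paper's. The paper first applies $H^{\otimes \log N}$ to the binary register to push the data $r_i$ into phases $(-1)^{r_i\cdot j}$, duplicates the \emph{binary} register $\eta$ times with fan-out, cancels those phases with multi-target controlled-$Z$ gates whose controls are the unary qubits (for unary qubit $l$, targeting the bits $k$ of the $l$-th binary copy with $(r_l)_k=1$), and then undoes the copying and the Hadamards. You instead stay entirely in the computational basis: you duplicate the \emph{unary} register $\log N$ times and, for each bit position $b$, XOR the parity of the unary qubits indexed by $S_b=\{i':(r_{i'})_b=1\}$ into the $b$-th binary qubit, which evaluates to $(r_i)_b$ precisely because $e_i$ has a single $1$; your appeal to the fact that the fan-in parity gate is the Hadamard conjugate of fan-out (hence constant depth by Lemma~\ref{lem:clifford}), and your collision analysis justifying one unary copy per bit position with total width $O(\sum_b|S_b|)=O(\eta\log N)$, are both sound, as is the linear extension to the superposition since every gate used is an X-type basis permutation. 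The two constructions are related by Hadamard conjugation of the binary qubits and have identical asymptotics (constant depth, width $O(\eta\log N)$); what your version buys is conceptual directness—the erasure is literally a classical lookup-and-XOR with no phase-cancellation argument—while the paper's version keeps the copying on the binary register, mirroring the structure of its Uncompress proof. One small caveat: the paper's Compress already attains $O(\eta\log N)$ (the $\log\log$ factor arises only in Uncompress, from the \textit{Equal} gates), so avoiding \textit{Equal} gates does not improve on the paper here, although your explanation of why parity gates suffice for this direction is correct.
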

\begin{proof}
Due to the same reason with the proof of Lemma \ref{lem:uncompress}, we omit the amplitude factor. In addition, we refer to the first $\log N$ qubits as the binary register and the next $\eta$ qubits as the unary register. Although unitary operators cannot generally remove information from a quantum state, we can erase the information of $\{r_i\}_{i=0}^{\eta-1}$ in the binary register since the values of $\{r_i\}_{i=0}^{\eta-1}$ are known. The first step involves applying Hadamard gates to each qubit in the binary register, transferring the information of $\{r_i\}_{i=0}^{\eta-1}$ into the phases, 
\begin{equation*}
    \sum_{i=0}^{\eta-1} \ket{r_i}_{\log N} \ket{e_{i}}_\eta \longrightarrow \sum_{i=0}^{\eta-1}\sum_{j=0}^{N-1} (-1)^{r_i \cdot j}\ket{j}_{\log N}\ket{e_i}_\eta,
\end{equation*}
where $r_i \cdot j$ represents the bitwise product of the binary numbers $r_i$ and $j$. Next, we use quantum fan-out gates to duplicate the binary register $\eta$ times,
\begin{equation*}
    \sum_{i=0}^{\eta-1}\sum_{j=0}^{N-1} (-1)^{r_i \cdot j} \ket{j}_{\log N}\ket{e_i}_\eta \longrightarrow \sum_{i=0}^{\eta-1}\sum_{j=0}^{N-1} (-1)^{r_i\cdot j}\ket{j}_{\log N}^{\otimes \eta}\ket{e_i}_\eta.
\end{equation*}
This duplication can be implemented in a constant-depth quantum circuit of width $O(\eta \log N)$. Using the unary register, we access each phase $(-1)^{r_i \cdot j}$. For each $0\leq l \leq \eta-1$ and $0\leq k \leq \log N-1$, we apply controlled-Z gates where the controlled qubit is the $k$-th qubit of the unary register, and the target qubit is the $k$-th qubit of $l$-th binary register from $\ket{j}_{\log N}^{\otimes \eta}$ if the $k$-th bit of $r_l$ is 1. Since a multi-target controlled-Z gate is a simple variant of the quantum fan-out gate, this operation can also be performed in parallel with a circuit width of $O(\eta\log N)$. This process completely removes the phase $(-1)^{r_i\cdot j}$, 
\begin{equation*}
    \sum_{i=0}^{\eta-1}\sum_{j=0}^{N-1} (-1)^{r_i \cdot j}\ket{j}^{\otimes \eta}\ket{e_i}_\eta \longrightarrow \sum_{i=0}^{\eta-1}\sum_{j=0}^{N-1} \ket{j}^{\otimes \eta}\ket{e_i}_\eta.
\end{equation*}
Finally, we obtain the desired state by reapplying the quantum fan-out gate and Hadamard gates,
\begin{equation*}
 \sum_{i=0}^{\eta-1}\sum_{j=0}^{N-1} \ket{j}_{\log N}^{\otimes \eta}\ket{e_i}_\eta \longrightarrow 
 \sum_{i=0}^{\eta-1} \ket{0}^{\otimes \eta-1}_{\log N}\ket{e_i}_\eta.
\end{equation*}
The dominant contribution to the circuit width arises from duplicating the binary register $\eta$ times, resulting in $O(\eta\log N)$.
\end{proof}
Given that both \textit{Uncompress} and \textit{Compress} operations are reversible, combining Lemmas \ref{lem:uncompress} and \ref{lem:compress} allows us to construct a constant-depth quantum circuit that transforms a quantum state between two different sets of non-negative integers, $\{r_i\}_{i=0}^{\eta-1}$ and $\{p_i\}_{i=0}^{\eta-1}$ satisfying $0\leq r_0<\dots<r_{\eta-1}\leq N-1$ and $0\leq p_o<\dots<p_{\eta-1}\leq N-1$, respectively. We illustrate the entire process of transforming the quantum state $\sum_{i=0}^{\eta-1}\ket{r_i}_{\log N_r}$ into $\sum_{i=0}^{\eta-1}\ket{p_i}_{\log N_p}$ in Fig. \ref{subfig:whole_scheme}, implemented within a constant-depth quantum circuit of width $\tilde{O}(\eta \log N_p)$, under the assumption that $N_r \leq N_p$. In Fig. \ref{subfig:whole_scheme}, the inverse of \textit{Uncompress} is represented as $Uncompress^{-1}$ and \textit{Compress} as $Compress^{-1}$, by leveraging the unitarity of both operations.
% 여기에 음..

It is worth noting that this unary-encoding-aided transformation was also utilized in Ref. \cite{luo2024circuit}. However, the difference lies in the fact that the authors in Ref. \cite{luo2024circuit} did not exploit parallelization techniques summarized in the Table. \ref{tab:operations}, and as a result, their quantum circuit was not a constant-depth. We emphasize that unary encoding is employed solely to represent the index of each integer, rather than the integer itself. This approach enables the design of an efficient algorithm with complexity that scales logarithmically with $N$. 

This transformation can be directly applied to sparse quantum state preparation \cite{zhang2022quantum, luo2024circuit, mao2024toward, ramacciotti2024simple}, which is widely applicable in various areas of quantum information processing. 
\begin{theorem}\label{thm:sparse}
    The $d$-sparse $n$-qubit quantum state is defined as
    \begin{equation}
    \ket{\psi_{\rm{sparse}}}=\sum_{i=0}^{d-1} \psi_i \ket{q_i}_n,
\end{equation}
where $\psi_i\in \mathbb{C}$ satisfy $\sum_{i=0}^{d-1}|\psi_i^2|=1$, and $\{q_i\}_{i=0}^{d-1}$ are binary numbers satisfying $0\leq q_i < 2^n$ for all $i$. This state can be prepared by a quantum circuit of depth $O(\log d)$ and width $O(dn\log n)$ using measurements and feedforward.
\end{theorem}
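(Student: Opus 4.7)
The plan is to reduce the task to preparing a dense amplitude-only state on $\log d$ qubits and then use the constant-depth unary bridge of Fig.~\ref{subfig:whole_scheme} to rewrite its support from $\{0,1,\dots,d-1\}$ to $\{q_i\}_{i=0}^{d-1}$. Since the $q_i$ are arbitrary distinct $n$-bit integers, I would first classically sort the pairs $(q_i,\psi_i)$ by the value of $q_i$, producing ordered labels $q'_0 < q'_1 < \cdots < q'_{d-1}$ with matched amplitudes $\psi'_j = \psi_{\sigma^{-1}(j)}$, where $\sigma$ is the sorting permutation. This reordering is free at the quantum level and guarantees that both endpoint sets in the bridge are sorted, as required by Lemmas~\ref{lem:uncompress} and~\ref{lem:compress}.

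Next I would prepare the dense state $\sum_{j=0}^{d-1}\psi'_j\ket{j}_{\log d}$ on a $\log d$-qubit register. Since $\log d$ qubits carry at most $d$ amplitudes, one can use a recursive binary-tree construction for arbitrary $m$-qubit state preparation with $O(2^m)$ ancillas, specialized to $m=\log d$, achieving depth $O(\log d)$ and width $O(d)$.

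Finally I would apply the four-step transformation in Fig.~\ref{subfig:whole_scheme}: (i) Uncompress on the index set $\{0,1,\dots,d-1\}$ attaches the unary label $\ket{e_j}_d$ to each $\ket{j}_{\log d}$; (ii) Compress on the same index set erases the binary index register; (iii) Compress$^{-1}$ on the target set $\{q'_j\}$, using a fresh $n$-qubit register initialized to $\ket{0}$, writes $\ket{q'_j}_n$ alongside $\ket{e_j}_d$; and (iv) Uncompress$^{-1}$ on $\{q'_j\}$ uncomputes the unary register. By Lemmas~\ref{lem:uncompress} and~\ref{lem:compress}, each step is constant depth with width $\tilde{O}(dn)=O(dn\log n)$. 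The amplitudes are untouched throughout, so the output is $\sum_j \psi'_j\ket{q'_j}_n = \sum_i \psi_i\ket{q_i}_n$. Composing the dense preparation with the bridge yields total depth $O(\log d)+O(1)=O(\log d)$ and width $\max\{O(d),O(dn\log n)\}=O(dn\log n)$.

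The main obstacle is the dense preparation in depth $O(\log d)$: it is not built directly out of the primitives in Table~\ref{tab:operations} and requires invoking a parallel state-preparation construction that trades ancillas for depth, which then has to be inserted into the hybrid circuit model without inflating the final width beyond $O(dn\log n)$. Once that component is in hand, the bridge provides a depth-free change of basis and the rest of the bookkeeping is immediate.
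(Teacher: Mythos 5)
Your proposal is correct and follows essentially the same route as the paper: prepare the dense $\lceil\log d\rceil$-qubit state $\sum_i\psi_i\ket{i}$ in depth $O(\log d)$ and width $O(d)$ (the paper cites Ref.~\cite{zhang2022quantum} for exactly this step), then apply \textit{Uncompress}/\textit{Compress} and their inverses to re-encode the support from $\{0,\dots,d-1\}$ to $\{q_i\}$, giving width $O(dn\log n)$. Your explicit remark about pre-sorting the $(q_i,\psi_i)$ pairs is a harmless bookkeeping detail the paper leaves implicit.
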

\begin{proof}
    According to Ref. \cite{zhang2022quantum}, which also employs measurements and feedforward to implement non-local CNOT gates, preparing an arbitrary $\tilde{n}=\lceil \log d \rceil$-qubit quantum state $\sum_{i=0}^{d-1}\psi_i\ket{i}_{\tilde{n}}$ requires quantum circuit of depth $O(\log d)$ and width $O(d)$. The desired state can be obtained by first applying the \textit{Uncompress} and \textit{Compress} operations, transforming the state to $\sum_{i=0}^{d-1} \psi_i\ket{e_i}_{d}$, requiring $\tilde{O}(d\log d)$ auxiliary qubits. Next, the inverse of \textit{Compress} and \textit{Uncompress} operations are applied to encode the information of the set $\{q_i\}$ into the quantum state, requiring $O(d\log 2^n \log \log 2^n)=O(dn\log n)$ auxiliary qubits.
\end{proof}
Note that the authors in Ref. \cite{zi2025constant} also proposed an algorithm for sparse quantum state preparation, using measurements and feedforward. Their quantum circuit have achieved constant-depth, but the width scales as $O(d^2\log n)$. Considering that $d$ is generally polynomial in $n$, we observe a tendency in these results \cite{luo2024circuit,zi2025constant} that the more we sacrifice width for the sake of depth, the worse the product of depth and width becomes. This tendency can also be found in Table. \ref{tab:summary}.

However, that tendency does not indicate that unary encoding makes things worse. In Ref. \cite{luo2024circuit}, the authors showed that by introducing unary encoding, a sparse quantum state can be prepared by quantum circuit size of $O(dn/\log d)$, depth of $O(\log dn)$, and width of $O(dn/\log d)$. To achieve quantum circuit depth logarithmic in $d$ and $n$, binary encoding requires circuit size of $O(dn \log d)$ where circuit depth is $O(\log dn)$, and width is $O(dn\log d)$ \cite{zhang2022quantum}. It is worth noting that Hamming weight encoding, which mediates unary and binary encoding, can be another solution for quantum state preparation \cite{kerenidis2022quantum, zhang2023efficient, farias2025quantum}. However, to the best of our knowledge, no previous work has employed Hamming weight encoding in the way this paper does, which is to drastically reduce circuit depth (logarithmically dependent on $d$ and $n$). 

\subsection{Series of Controlled-pairwise commuting gates}
\begin{figure}[t!]  % 여기 그림에 T도 넣고, n, N 반영하기
    \centering
    \subfloat[\label{subfig:series1}]{
    \includegraphics[width=1.0\linewidth]{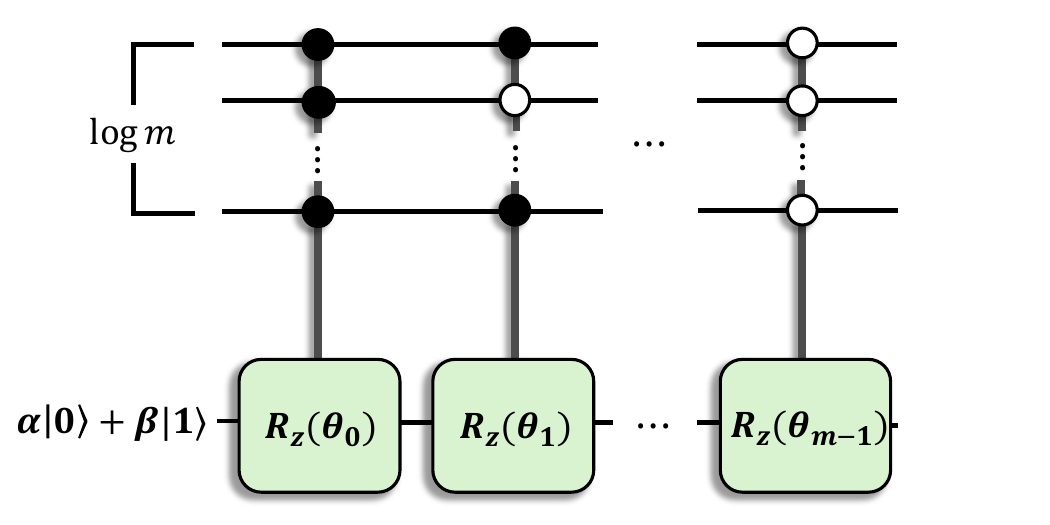}
    }\vfill
    \subfloat[\label{subfig:series2}]{
    \includegraphics[width=1.0\linewidth]{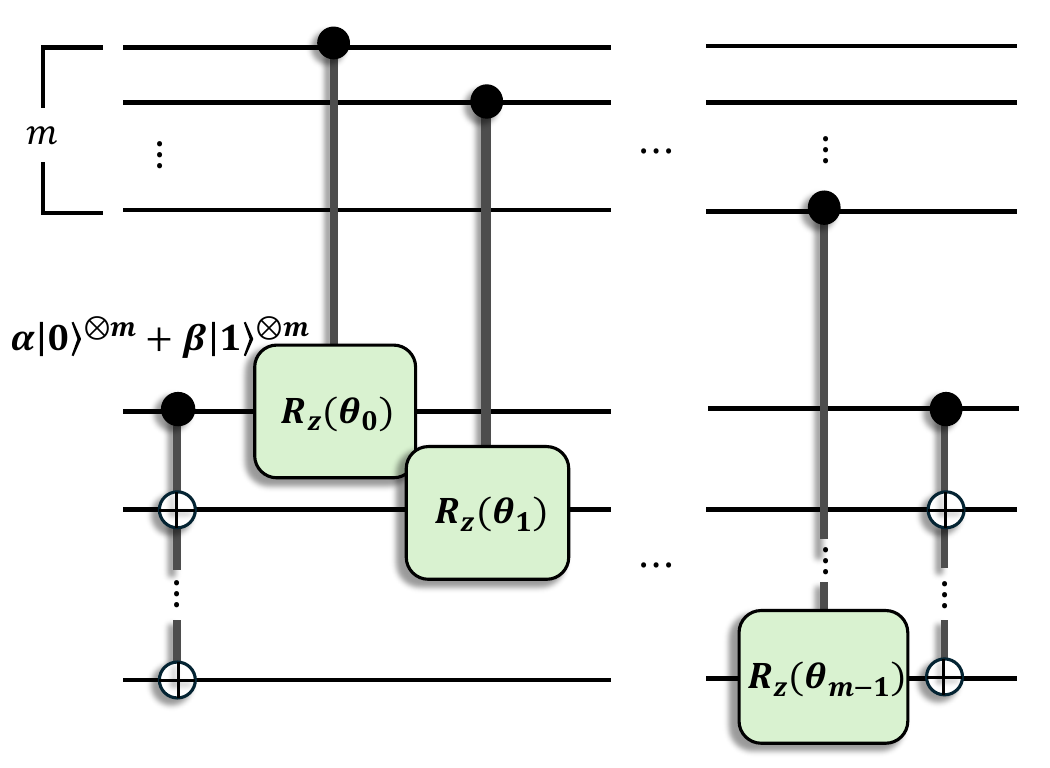}
    }
    \caption{(a) A series of multi-controlled pairwise commuting gates, with $R_z$ gates shown as an example of pairwise commuting gates. (b) The series of multi-controlled gates in (a) is replaced with controlled gates using the quantum fan-out gate and unary encoding.}
    \label{fig:series}
\end{figure}
At first glance, operating a series of $m$ controlled gates to $\log m$ qubits, as illustrated in Fig. \ref{subfig:series1}, appears to require at least $O(m)$ circuit depth. The situation becomes even more challenging when multi-controlled gates are involved \cite{vale2023decomposition}. However, this series of multi-controlled gates can be parallelized by leveraging a combination of measurements, feedforward, and certain assumptions. The parallelization of a series of controlled pairwise commuting gates is grounded in a fundamental theorem of linear algebra, which states that two commuting matrices can be simultaneously diagonalized. Combining these diagonalized gates with a replicated unary encoded controlled register-preparable by the quantum fan-out gate-it becomes possible to parallelize a series of controlled pairwise commuting gates as below.
\begin{lemma}\label{lem:controlled}
    Let $\{U_i\}^{m-1}_{i=0}$ be a set of pairwise commuting gates acting on the same $k$ qubits, where $U^{(x_i)}_i$ denotes the gate $U_i$, controlled by the $\log m$-qubit quantum state $\ket{x_i}_{\log m}$. Furthermore, assume that a unitary operator $T$ simultaneously diagonalizes the set $\{U_i\}^{m-1}_{i=0}$. Then, a series of multi-controlled pairwise commuting gates
    \begin{equation}
        U:=\prod^{m-1}_{i=0} U^{(x_i)}
    \end{equation}
    can be implemented in a quantum circuit of depth $O(\max^{m-1}_{i=0} [\rm{depth}$$(U_i)]+4\cdot \rm{depth}(T))$ and width $O(km)$, where the function $\rm{depth}(U)$ refers to the circuit depth of $U$ when it is decomposed as CNOT gates and single qubit rotation gates (or other conventional gate decomposition sets).
\end{lemma}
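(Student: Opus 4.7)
The plan is to combine two ingredients: the simultaneous diagonalizability of the commuting set $\{U_i\}_{i=0}^{m-1}$ via $T$, and the constant-depth fan-out afforded by Lemma \ref{lem:clifford}, which parallelizes operators that preserve the computational basis up to a phase. Setting $D_i := T U_i T^\dagger$, each $D_i$ is diagonal in the computational basis. Because $T$ acts only on the $k$-qubit target register, it commutes with every projector on the control register, giving $U_i^{(x_i)} = (I_C \otimes T^\dagger)\, D_i^{(x_i)}\, (I_C \otimes T)$ and consequently $U = (I_C \otimes T^\dagger) \bigl(\prod_{i=0}^{m-1} D_i^{(x_i)}\bigr) (I_C \otimes T)$. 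The task therefore reduces to implementing the product of controlled diagonal gates sandwiched between a single global $T$ and $T^\dagger$ on the target.

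First I would convert the $\log m$-qubit equality tests into single-qubit indicators: fan out the control register to $m$ copies via a Clifford-ladder of CNOTs (constant depth, Lemma \ref{lem:clifford}) and apply $m$ parallel $Equal_{x_i}$ gates from Table \ref{tab:operations} to produce ancillas $a_i = \mathds{1}_{c = x_i}$, so that $D_i^{(x_i)} = D_i^{(a_i)}$. Next I would apply $T$ to the target (depth $\mathrm{depth}(T)$) and fan out the $k$-qubit target to $m$ copies via $k$ parallel Clifford-ladders, producing $\sum_y c_y |y\rangle^{\otimes m}$ with $T|\psi\rangle = \sum_y c_y |y\rangle$, in constant depth using $O(km)$ ancillas. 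On the $i$-th copy I realize $D_i^{(a_i)}$ as $T\, U_i^{(a_i)}\, T^\dagger$; since these $m$ actions are on disjoint registers they run fully in parallel, contributing depth at most $2\,\mathrm{depth}(T) + \max_i \mathrm{depth}(U_i) + O(1)$. Each $D_i$ acts diagonally, so it multiplies the copy in state $|y\rangle$ by $e^{i a_i \phi_i(y)}$, with the global phase $e^{i \sum_i a_i \phi_i(y)}$ accumulating coherently and surviving the subsequent CNOT disentangling of the target fan-out. A final $T^\dagger$ on the target (depth $\mathrm{depth}(T)$) and uncomputation of the indicator and control copies complete the construction, yielding total depth $O(\max_i \mathrm{depth}(U_i) + 4\,\mathrm{depth}(T))$ and width $O(km)$ dominated by the $m$-fold target fan-out.

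The main obstacle I expect is the coherence accounting of the fan-out step: CNOT fan-out does not clone a general superposition but only entangles it as $\sum_y c_y |y\rangle^{\otimes m}$, so the argument hinges essentially on every $D_i$ being diagonal in the same basis selected by $T$, guaranteeing that the copy-local actions contribute only global phases that remain well-defined after disentangling back to a single register. A secondary delicacy is pinning down the constant $4\,\mathrm{depth}(T)$: one $T$-$T^\dagger$ pair for the global basis change of the target, and one additional pair absorbed into each parallel realization of $D_i^{(a_i)}$ on its copy as $T U_i^{(a_i)} T^\dagger$.
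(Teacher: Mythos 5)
Your construction is correct and is essentially the same as the one the paper relies on: the paper simply cites Theorem 3.2 of H{\o}yer--{\v S}palek and its Fig.~\ref{subfig:series2}, whose content is exactly your argument---conjugate by $T$ to make the commuting gates diagonal, fan out the target (and convert the $\log m$-qubit controls to single-qubit indicators via fan-out and $Equal$ gates), apply the controlled diagonal gates on disjoint copies in parallel as $T\,U_i^{(a_i)}\,T^\dagger$, un-fan-out, and undo the basis change, giving depth $O(\max_i\mathrm{depth}(U_i)+4\,\mathrm{depth}(T))$ and width $O(km)$. Your remark that diagonality is what makes the entangling (non-cloning) fan-out harmless is precisely the key point of that construction.
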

\begin{proof}
    See Theorem 3.2 in Ref. \cite{hoyer2005quantum}. The entire circuit after parallelization is depicted in Fig. \ref{subfig:series2}. 
\end{proof}
From the perspective of quantum state preparation, Lemma \ref{lem:controlled} serves as a powerful tool for preparing specific states with a large degree of freedom in their phase. However, directly converting Fig. \ref{subfig:series1} into Fig. \ref{subfig:series2} would replace $\log m$ with $m$, resulting in an exponential increase in circuit width. Therefore, we apply the idea that multiple-phase gates can be applied simultaneously using quantum fan-out gates to a state that is already given in unary encoding. In section \ref{sec:bethe_ansatz}, we show how this tool can be combined with others to construct a constant-depth quantum circuit for the preparation of the Bethe wavefunction.

\section{Symmetric and antisymmetric quantum states} \label{sec:antisymmetric_quantum_states}
Symmetric and antisymmetric quantum states are two of the most fundamental concepts in quantum mechanics, arising from the characteristics of indistinguishable particles: bosons and fermions. From the perspective of second quantization, these symmetries are inherently ensured by specially designed annihilation and creation operators, and are not explicitly encoded at the state level \cite{jordan1928paulische, bravyi2002fermionic, coleman2015introduction, macridin2022bosonic}. Since the second quantization representation of fermions aligns more intuitively with the spin-$\frac{1}{2}$ formalism used in digital quantum computers compared to the first quantization representation, numerous studies have focused on preparing arbitrary Slater determinants on quantum computers using the second quantization approach way \cite{ortiz2001quantum,wecker2015solving,kivlichan2018quantum, jiang2018quantum,chee2023shallow}.

Compared to the second quantization, the representation of quantum states under the first quantization scheme requires exponentially fewer qubits in the number basis sets, and exhibits lower time complexity in certain cases \cite{abrams1997simulation}. This significantly reduced spatial complexity has motivated the adoption of the plane-wave basis, which requires a substantially larger number of basis functions than the Gaussian basis, but offers higher accuracy \cite{su2021fault}. Recently, the authors of Refs. \cite{babbush2023quantum, huggins2024efficient} proposed efficient basis transformation of Slater determinants within the first quantization framework. Additionally, efficient (anti)symmetrization algorithms have been introduced in Ref. \cite{berry2018improved}. However, the preparation of sums of Slater determinants under the first quantization, which is a widely used initial state in classical computational chemistry, has received relatively less attention \cite{su2021fault}. In this section, we first show how to prepare a single (anti)symmetric state using a constant-depth quantum circuit with measurements and feedforward. Then, leveraging the prepared single (anti)symmetric state, we construct $d$-sums of Slater determinants with a quantum circuit depth of either $O(d)$ or $O(\log d)$.

\subsection{Preparing single (anti)symmetric state}
The single (anti)symmetric state in our study is defined as follows.
\begin{definition}
    For two positive integers $\eta$ and $N$ satisfying $\eta\leq N$, let $\{r_i\}_{i=0}^{\eta-1}$ be a set of non-negative integers where $0\leq r_0<r_1<\dots<r_{\eta-1}< N$. Then, the single symmetric state is defined as
    \begin{equation}
        \ket{\psi_{\mathrm{sym}}}:= \frac{1}{\sqrt{\eta !}}\sum_{\sigma\in S_\eta} \ket{r_{\sigma(0)}} \dots \ket{r_{\sigma(\eta-1)}},
    \end{equation}
    and the single antisymmetric state is defined as
    \begin{equation}
        \ket{\psi_{\rm{anti}}}:=\frac{1}{\sqrt{\eta !}}\sum_{\sigma\in S_\eta}(-1)^{\pi(\sigma)} \ket{r_{\sigma(0)}}\dots\ket{r_{\sigma(\eta-1)}},
    \end{equation}
    where $S_\eta$ and $\pi(\sigma)$ denote the permutation group of size $\eta$, and the parity of the permutation $\sigma$, respectively. 
\end{definition}
The constant-depth preparation of a single (anti)symmetric state involves two key Lemmas \ref{lem:filling_filtering} and \ref{lem:parity}, as well as the \textit{Uncompress} and \textit{Compress} operations. Before that, we define one notation that will be frequently used in our paper.
\begin{definition}
For two positive integers $\eta$ and $N$ satisfying $\eta \leq N$, we define the set $\Xi_{\eta}^{N}$ as the collection of $\eta$-combinations of non-negative integers from $0$ to $N-1$, represented as
\begin{equation*}
    \Xi_{\eta}^{N}:=\{(x_0, \dots, x_{\eta-1}) \in \mathbb{Z}^\eta \mid 0 \leq x_0  < \dots < x_{\eta-1} < N \}.
\end{equation*}
\end{definition}
Our preparation process begins by applying the \textit{Filling} and \textit{Filtering} techniques described in Ref. \cite{buhrman2024state}.
\begin{lemma}\label{lem:filling_filtering}
    For two positive integers $\eta$ and $\zeta$ satisfying $\eta^2\leq \zeta$, the quantum state
    \begin{equation}
    \frac{1}{\sqrt{\binom{\zeta}{\eta}\eta !}}\sum_{\vec{j}\in \Xi_\eta^\zeta} \sum_{\sigma\in S_\eta} \ket{j_{\sigma(0)}}\dots\ket{j_{\sigma(\eta-1)}}\ket{\bigoplus_{i=0}^{\eta-1}e_{j_i}}
    \end{equation}
    can be prepared by a constant-depth quantum circuit of width $\tilde{O}(\eta\zeta\log\zeta)$ using measurements and feedforward. Here, $\bigoplus_{i=0}^{\eta-1}e_{j_i}=e_{j_0}\oplus\dots\oplus e_{j_{\eta-1}}$ denotes the bitwise summation of $e_{j_0},\dots,e_{j_{\eta-1}}$.
\end{lemma}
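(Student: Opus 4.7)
The plan is to implement the Filling-and-Filtering procedure of Ref.~\cite{buhrman2024state}, adapted to also deliver the promised XOR marker register. First I would perform the \emph{Filling} step: apply Hadamards to $\eta$ independent binary registers of $\log\zeta$ qubits each, giving
\begin{equation*}
\frac{1}{\zeta^{\eta/2}} \sum_{j_0,\dots,j_{\eta-1}=0}^{\zeta-1}\ket{j_0}\cdots\ket{j_{\eta-1}}.
\end{equation*}
This state is automatically symmetric under permutation of the registers but contains tuples with repeated entries.

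Next, in parallel on the $\eta$ binary registers, I would apply the \textit{Uncompress} construction of Lemma~\ref{lem:uncompress} to produce an auxiliary unary register $\ket{e_{j_i}}$ of $\zeta$ qubits for each $i$, then use quantum fan-out CNOTs to XOR all $\eta$ unary registers bitwise into a single fresh $\zeta$-qubit register, obtaining $\ket{\bigoplus_{i=0}^{\eta-1}e_{j_i}}$. All of this is constant depth by Lemma~\ref{lem:clifford} and the remarks on fan-out.

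Now comes \emph{Filtering}. The XOR register has Hamming weight exactly $\eta$ if and only if the $j_i$ are pairwise distinct, since each collision cancels two bits. Using the $Hammingweight$ gate of Table~\ref{tab:operations}, I would write the weight of the XOR register into an ancilla, compare it with the constant $\eta$ to produce a single-qubit flag, uncompute the Hamming weight ancilla, and measure the flag. By the standard birthday estimate, under $\eta^2\leq\zeta$ the success probability is $\prod_{k=0}^{\eta-1}(1-k/\zeta)\geq 1-\eta^2/(2\zeta)\geq 1/2$, so the protocol succeeds with constant probability. Conditioned on the flag reading ``no collision,'' the binary registers project onto the uniform superposition over $\zeta$-tuples with distinct entries, which is precisely $\frac{1}{\sqrt{\binom{\zeta}{\eta}\eta!}}\sum_{\vec{j}\in\Xi_\eta^\zeta}\sum_{\sigma\in S_\eta}\ket{j_{\sigma(0)}}\cdots\ket{j_{\sigma(\eta-1)}}$.

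Finally I would uncompute the $\eta$ individual $\ket{e_{j_i}}$ registers by reversing each \textit{Uncompress}, again in constant depth. The XOR register sits on its own qubits and, crucially, depends only on the unordered multiset $\{j_i\}$, so this uncomputation neither disturbs it nor breaks the permutation symmetry of the binary registers. The dominant width cost is the $\eta$ parallel \textit{Uncompress} blocks, each with $\tilde O(\zeta\log\zeta)$ ancillas from the $Equal$ gates, giving the claimed $\tilde O(\eta\zeta\log\zeta)$.

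The main obstacle is ensuring that after filtering and uncomputation no residual entanglement with ancillas or leftover phases from the flag measurement remains: the unary auxiliaries, the Hamming-weight scratch, and the flag all have to be cleanly disentangled. The permutation invariance of $\bigoplus_i e_{j_i}$ is what makes this possible, but writing out the bookkeeping carefully so that the surviving state matches the target up to normalization is the step that requires the most care.
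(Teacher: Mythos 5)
Your construction follows essentially the same route the paper relies on: the paper's own proof is just a pointer to the \emph{Filling} and \emph{Filtering} processes of Ref.~\cite{buhrman2024state}, and what you write out—uniform filling of $\eta$ pointer registers, value-to-one-hot conversion (note this is Lemma~\ref{lem:uncompress} instantiated with the full set $\{0,\dots,\zeta-1\}$, so the unary index coincides with the value), XOR into a single $\zeta$-qubit register, and collision detection via the Hamming weight—is a faithful reconstruction of that pipeline, with the width accounting $\tilde O(\eta\zeta\log\zeta)$ coming out right and the uncomputation argument (permutation invariance of $\bigoplus_i e_{j_i}$) sound. The one substantive caveat is your handling of the Filtering step: you herald on the flag and accept success probability $\geq 1/2$ from the birthday bound, so a single run of your circuit prepares the state only probabilistically. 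The lemma as stated, and its downstream uses (Theorem~\ref{thm:single_symmetric}, Theorem~\ref{thm:sos}, Lemma~\ref{lem:resource_state}), charge no failure probability to this step—the paper carefully books success probabilities elsewhere (e.g.\ $O(1/\sqrt{M})$, $O(1/M!)$) but not here—so you should either explain how the cited construction disposes of the collision branch, or make explicit that one needs repeat-until-success (expected $O(1)$ rounds, hence only expected constant depth) or $m$ parallel attempts selected by feedforward (failure probability $2^{-m}$ at an $m$-fold width overhead), neither of which is exactly deterministic. Apart from making that reconciliation explicit, the argument is correct.
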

\begin{proof}
    See the \textit{Filling} and \textit{Filtering} processes in Ref. \cite{buhrman2024state}.
\end{proof}
According to Lemma \ref{lem:filling_filtering}, we obtain superpositions of $\ket{j_{\sigma(0)}}\dots\ket{j_{\sigma(\eta-1)}}$ and their corresponding indices $\ket{\bigoplus_{i=0}^{\eta-1}e_{j_i}}$. To construct the antisymmetric state, it is necessary to attach the phase $(-1)^{\pi(\sigma)}$ to each permutation. This can be achieved in a constant-depth quantum circuit by leveraging the definition of the parity $\pi(\sigma)$.
\begin{lemma}\label{lem:parity}
For two positive integers $\eta$ and $\zeta$ satisfying $\eta\leq \zeta$, let $\vec{j}=(j_0,\dots,j_{\eta-1})$ be any element of the set $\Xi_{\eta}^{\zeta}$. Then, the transformation
\begin{align}
    &\frac{1}{\sqrt{\eta!}}\sum_{\sigma\in S_\eta}\ket{j_{\sigma(0)}}\dots\ket{j_{\sigma(\eta-1)}}\\
    &~~\longrightarrow \frac{1}{\sqrt{\eta!}}\sum_{\sigma\in S_\eta}(-1)^{\pi(\sigma)}\ket{j_{\sigma(0)}}\dots\ket{j_{\sigma(\eta-1)}}
\end{align}
can be implemented in a constant-depth quantum circuit of width $O(\eta^2\log^2\zeta)$ with measurements and feedforward.
\end{lemma}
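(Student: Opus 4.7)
The plan is to use the classical identity that the parity $\pi(\sigma)$ of a permutation acting on a sorted tuple $(j_0 < j_1 < \dots < j_{\eta-1})$ equals the parity of the number of inversions of the permuted sequence $(j_{\sigma(0)}, \dots, j_{\sigma(\eta-1)})$, i.e., the parity of $|\{(a,b) : 0 \le a < b \le \eta-1,\ j_{\sigma(a)} > j_{\sigma(b)}\}|$. Because the input registers already carry this permuted sequence in superposition, I would compute the inversion-count parity coherently into a single ancilla, imprint a $Z$ on that ancilla to attach the $\pm1$ phase to each branch, and then uncompute everything. This way each permuted branch acquires exactly the phase $(-1)^{\pi(\sigma)}$ without leaving any residual entanglement with ancilla registers.

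Concretely, I would first fan out each of the $\eta$ data registers $\ket{j_{\sigma(a)}}_{\log\zeta}$ to $\eta-1$ copies using the constant-depth quantum fan-out gate from Lemma \ref{lem:clifford}, so that all $\binom{\eta}{2}$ pairs $(a,b)$ can be inspected simultaneously; this step costs $O(\eta^2 \log\zeta)$ auxiliary qubits. For each pair $(a,b)$ with $a<b$ I would then apply the constant-depth \textit{Greaterthan} routine from Table \ref{tab:operations} to a dedicated ancilla $c_{a,b}$, producing $c_{a,b} = \mathds{1}_{j_{\sigma(a)} > j_{\sigma(b)}}$. Since the copies are disjoint, all $\binom{\eta}{2}$ comparisons run in parallel at constant depth, contributing the dominant width $O(\eta^2 \log^2\zeta)$. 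Next, I would collect the parity $p = \bigoplus_{a<b} c_{a,b}$ into a single ancilla via a constant-depth XOR (parity) circuit, which is in $\mathrm{QNC}_{\mathrm{f}}^0$ and is realizable by repeated CNOTs through the quantum fan-out gate. Applying $Z$ to the parity qubit stamps the desired phase $(-1)^{\pi(\sigma)}$ on every branch, after which running each subroutine in reverse uncomputes the parity ancilla, the comparator ancillas, and finally the fan-out copies, returning all $O(\eta^2 \log^2\zeta)$ ancillas to $\ket{0}$. Since each subroutine is constant-depth and so is its inverse, the full procedure remains constant-depth with the stated width.

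The main obstacle I expect is the bookkeeping of ancilla correlations during uncomputation: the comparator bits $c_{a,b}$ become entangled with the permuted data, and one must verify that running \textit{Greaterthan}, the parity tree, and the fan-out subroutines in reverse deterministically restores all ancillas to $\ket{0}$ on every branch of the superposition, so that the only lasting effect on the data registers is the intended global phase. Because each constant-depth building block is an explicit unitary (compiled from measurements, feedforward, and Pauli corrections in the sense of Lemma \ref{lem:clifford}), this uncomputation goes through cleanly, but it is the part of the argument where the interplay between measurement-based parallelization and coherent phase kickback must be checked carefully.
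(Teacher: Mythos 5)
Your proposal is correct and follows essentially the same route as the paper: fan out the data registers, run all $\binom{\eta}{2}$ \textit{Greaterthan} comparisons in parallel to record inversions, imprint the phase, and uncompute, with the same dominant width $O(\eta^2\log^2\zeta)$. The only (harmless) difference is that you XOR the comparator bits into a single parity ancilla before applying one $Z$, whereas the paper skips that step and simply applies a $Z$ to every comparator qubit, since the product of the resulting $\pm 1$ factors already equals $(-1)^{\pi(\sigma)}$.
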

\begin{proof}
See Appendix \ref{app:proof_parity}.
\end{proof}
By combining these Lemmas, we now propose a constant-depth quantum circuit for preparing a single (anti)symmetric state using measurements and feedforward.
\begin{theorem}\label{thm:single_symmetric}
For two positive integers $\eta$ and $N$ satisfying $\eta\leq N$, let $\vec{r}=(r_0,\dots,r_{\eta-1})$ be any element of the set $\Xi_\eta^N$. Then, the single symmetric state $\ket{\psi_{\mathrm{sym}}}$ and antisymmetric state $\ket{\psi_{\rm{anti}}}$ can be prepared in a constant-depth quantum circuit of width $\tilde{O}(\eta^2(\eta\log \eta+\log N))$ with measurements and feedforward.    
\end{theorem}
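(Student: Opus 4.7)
The plan is to construct the target (anti)symmetric state in three macroscopic stages. First, I would prepare a symmetric state over a random $\eta$-subset of a small ``scratch'' universe of size $\zeta=\eta^{2}$ using the filling/filtering machinery of Lemma~\ref{lem:filling_filtering}. Second, I would collapse the superposition of subsets to a single known one by a projective measurement and, for the antisymmetric case, imprint the sign of each permutation using Lemma~\ref{lem:parity}. Third, I would relabel the integers stored in each data register so that the random $\vec j$ revealed by the measurement becomes the target $\vec r$, using the constant-depth set-to-set transformation of Section~\ref{sec:routines_for_parallelization} applied in parallel to all $\eta$ copies.

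Concretely, invoking Lemma~\ref{lem:filling_filtering} with $\zeta=\eta^{2}$ produces
\begin{equation*}
\frac{1}{\sqrt{\binom{\eta^{2}}{\eta}\,\eta!}}\sum_{\vec j\in\Xi_{\eta}^{\eta^{2}}}\sum_{\sigma\in S_{\eta}}\ket{j_{\sigma(0)}}\cdots\ket{j_{\sigma(\eta-1)}}\ket{\bigoplus_{i=0}^{\eta-1} e_{j_{i}}},
\end{equation*}
at $\tilde O(\eta^{3}\log\eta)$ width. Because distinct $\vec j\in\Xi_{\eta}^{\eta^{2}}$ yield distinct indicator bit-strings $\bigoplus_{i} e_{j_{i}}$, measuring the last register in the computational basis deterministically reveals one specific $\vec j$ and leaves the $\eta$ data registers in $\frac{1}{\sqrt{\eta!}}\sum_{\sigma}\ket{j_{\sigma(0)}}\cdots\ket{j_{\sigma(\eta-1)}}$. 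For the antisymmetric case I would then apply Lemma~\ref{lem:parity} to imprint $(-1)^{\pi(\sigma)}$, at width $O(\eta^{2}\log^{2}\zeta)=O(\eta^{2}\log^{2}\eta)$ in constant depth.

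With $\vec j$ now classical data, the final stage applies, in parallel to all $\eta$ data registers, the bijection $V:\ket{j_{\ell}}\mapsto\ket{r_{\ell}}$ for $\ell=0,\dots,\eta-1$. This $V$ is exactly the set-to-set transformation obtained by chaining Lemmas~\ref{lem:uncompress} and~\ref{lem:compress}: $Uncompress$ writes the unary index $\ket{e_{\ell}}$ into a fresh register, $Compress$ erases the old binary value, and then $Compress^{-1}$ followed by $Uncompress^{-1}$ with the target set $\{r_{\ell}\}$ re-writes $r_{\ell}$ and cleans the ancillas. Because $V$ acts register-by-register and sends $j_{\ell}\mapsto r_{\ell}$ bijectively, it sends each branch $\ket{j_{\sigma(0)}}\cdots\ket{j_{\sigma(\eta-1)}}$ to $\ket{r_{\sigma(0)}}\cdots\ket{r_{\sigma(\eta-1)}}$, producing the desired $\ket{\psi_{\mathrm{sym}}}$ or $\ket{\psi_{\mathrm{anti}}}$. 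A single copy of $V$ has width $\tilde O(\eta\log N)$ (dominated by the $N$-side of $Compress^{-1}$), so the $\eta$ parallel copies contribute $\tilde O(\eta^{2}\log N)$ ancillas. Summing all three stages yields $\tilde O(\eta^{3}\log\eta+\eta^{2}\log^{2}\eta+\eta^{2}\log N)=\tilde O\bigl(\eta^{2}(\eta\log\eta+\log N)\bigr)$, matching the claim.

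The step that I expect to require the most care is the coherent parallelization of the relabeling across the $\eta$ entangled data registers: one must check that the classical feedforward produced by the single measurement of the indicator register can be dispatched, within the $O(\log n)$ classical layer allowed by the hybrid-circuit model, to every copy of the $Uncompress$/$Compress$ chain, and that the ancilla pools of the $\eta$ parallel chains can be allocated disjointly without inflating the width past $\tilde O(\eta^{2}\log N)$. A lesser but important subtlety is the ordering of the parity imprint: applying Lemma~\ref{lem:parity} \emph{before} the relabeling keeps its width logarithmic in $\zeta=\eta^{2}$ rather than in $N$, which is precisely what avoids a stray $\log^{2}N$ term in the final bound.
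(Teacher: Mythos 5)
Your proposal is correct and follows essentially the same route as the paper's own proof in Appendix~\ref{app:proof_single_symmetric}: Lemma~\ref{lem:filling_filtering} with $\zeta=\eta^2$, measurement of the indicator register to fix a known subset $\vec{j}$, the parity imprint of Lemma~\ref{lem:parity} for the antisymmetric case, and a parallel per-register relabeling $\ket{j_\ell}\mapsto\ket{r_\ell}$ via the \textit{Uncompress}/\textit{Compress} chain, with matching width accounting. The only cosmetic difference is that the paper attaches the parity phase before the measurement rather than after, which is immaterial since the parity circuit acts branch-wise over the superposed subsets.
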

\begin{proof}
See Appendix \ref{app:proof_single_symmetric}.
\end{proof}

\subsection{Preparaing sum of Slater determinants}
In Ref. \cite{berry2018improved}, the authors proposed an algorithm to (anti)symmetrizes any ordered state $\sum_{i=0}^{d-1}\psi_i \ket{r_0^i}\dots\ket{r_{\eta-1}^i}$, where each $\vec{r}^i\in \Xi_{\eta}^N$ and $\psi_i\in \mathds{C}$, with a circuit depth of $O((\log\eta)^2\log\log N)$ through quantum bitonic sort. It is evident that if the desired state is a single (anti)symmetric state, our Theorem \ref{thm:single_symmetric} achieves a significantly shorter circuit depth. However, for more complex problems, a single (anti)symmetric state may lack sufficient quality to provide a solution. In the following, we address the preparation of the sums of Slater determinants (SOS), representing a more advanced initial state for quantum simulation. To the best of our knowledge, there has been only one approach to preparing SOS in the first quantization scheme: encoding the coefficients of each determinant, encoding the corresponding sets of non-negative integers, and then performing antisymmetrization. Our results show improvement by combining the process of encoding integer sets and antisymmetrization.

Antisymmetrizing the state $\sum_{i=0}^{d-1}\psi_i \ket{r_0^i}\dots\ket{r_{\eta-1}^i}$ yields the SOS in the first quantization scheme, expressed as follows:
\begin{equation}\label{eq:SOS}
    \ket{\psi_{\rm{SOS}}}=\sum_{\sigma\in S_\eta}\sum_{i=0}^{d-1}\psi_i(-1)^{\pi(\sigma)}\ket{r_{\sigma(0)}^i}\dots\ket{r_{\sigma(\eta-1)}^i},
\end{equation}
where $d$ is the number of superposed states and $\psi_i\in \mathbb{C}$ represents the coefficient of the $i$-th Slater determinant. The SOS is a widely used initial state in computational chemistry, as it can incorporate more correlation than a single Slater determinant \cite{veis2010quantum,tubman2018postponing}. This concept is a key component of configuration interaction (CI) in chemistry, which is further categorized as configuration interaction with singles (CIS), configuration interaction with singles and doubles (CISD), and full configuration interaction (FCI), depending on the desired level of correlation \cite{wang2009efficient,helgaker2013molecular}. 

We now propose two different methods for preparing SOS. The primary distinction between these approaches lies in the number of qubits required. To ensure the feasibility of these algorithms under the first quantization scheme, it is essential to maintain a circuit width that is polynomial in $\eta$ and logarithmic in $N$. However, the scale of $d$ is highly problem-dependent: a small $d$ suffices for simple problems, $d$ scales linearly with the number of basis functions $N$ for CIS, and becomes quadratic in $N$ for CISD. If the circuit width depends on $O(\rm{poly}(d))$ and $d$ is polynomial in $N$, this approach becomes unsuitable for first quantization simulations. To address this, we propose two algorithms with circuit depths of $O(d)$ and $O(\log d)$, respectively, allowing the choice of method based on the scale of $d$. Notably, our algorithm can also prepare $d$-sums of permanent in the same manner, by omitting the phase factor $(-1)^{\pi(\sigma)}$.

The following theorem provides two algorithms for preparing sums of Slater determinants.
\begin{theorem}\label{thm:sos}
For positive integers $\eta$, $d$, and $N$ satisfying $\eta, d\leq N$, let $\vec{r}^0,\dots,\vec{r}^{d-1}$ be distinct elements of the set $\Xi_{\eta}^N$. Then, the $d$-sums of Slater determinants in the first quantization scheme $\ket{\psi_{\rm{SOS}}}$ can be prepared by a quantum circuit of depth $O(d)$ and width $\tilde{O}(\eta^2(\eta\log\eta+\log (N))+\log d)$ with measurements and feedforward, or a quantum circuit of depth $O(\log d)$ and width $\tilde{O}(\eta^2d^2\log N)$ with measurements and feedforward.
\end{theorem}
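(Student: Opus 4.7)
My plan is to factor the preparation of $\ket{\psi_{\rm{SOS}}}$ into two conceptually independent pieces. First I would build the \emph{ordered} sparse state
\begin{equation*}
\ket{\Psi_{\mathrm{ord}}}:=\sum_{i=0}^{d-1}\psi_i\ket{r_0^i}\ket{r_1^i}\cdots\ket{r_{\eta-1}^i},
\end{equation*}
in which each term carries the unique ascending tuple $\vec{r}^i$; then I would apply the antisymmetrization subroutine underlying Theorem \ref{thm:single_symmetric} (i.e.\ Lemmas \ref{lem:filling_filtering} and \ref{lem:parity}) in superposition over $\ket{\Psi_{\mathrm{ord}}}$. Both algorithms in the statement follow this template and differ only in how the ordered state is produced, which is what controls the depth-width trade-off.

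For the $O(d)$-depth variant, I would obtain $\ket{\Psi_{\mathrm{ord}}}$ by sequential injection. First, prepare a coefficient register $\sum_i\psi_i\ket{i}$ on $\lceil\log d\rceil$ qubits using the $O(\log d)$-depth, $O(d)$-width arbitrary-state preparation already invoked inside the proof of Theorem \ref{thm:sparse}. Then, for each $i=0,1,\dots,d-1$ in turn, apply the constant-depth $\textit{Equal}_i$ gate of Table \ref{tab:operations} to create a flag qubit for the event ``coefficient register $=i$'', and, conditioned on that flag, write the classically known pattern $\vec{r}^i$ into a data register via fan-out. The $d$ iterations are sequential but each has constant depth, producing $\sum_i\psi_i\ket{i}\ket{\vec{r}^i}$ in depth $O(d)$. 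Since $i\mapsto\vec{r}^i$ is injective, the coefficient register is then uncomputed by running the same injection in reverse against the data register. Finally, applying Theorem \ref{thm:single_symmetric} to the ordered superposition adds only constant depth and ancilla width $\tilde{O}(\eta^2(\eta\log\eta+\log N))$, giving the claimed total width $\tilde{O}(\eta^2(\eta\log\eta+\log N)+\log d)$.

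For the $O(\log d)$-depth variant, I would produce $\ket{\Psi_{\mathrm{ord}}}$ in one shot by invoking Theorem \ref{thm:sparse} on the concatenated bit-strings $q_i:=r_0^i\Vert r_1^i\Vert\cdots\Vert r_{\eta-1}^i$ of length $n=\eta\lceil\log N\rceil$; this gives depth $O(\log d)$ and width $\tilde{O}(d\,\eta\log N)$. The antisymmetrization of Theorem \ref{thm:single_symmetric} is then applied to the superposition. Because the filling/filtering and inversion-counting workspaces of Lemmas \ref{lem:filling_filtering} and \ref{lem:parity} were sized for a single ordered tuple, one must enlarge (or replicate per branch of the superposition) those workspaces so that every one of the $d$ summands is antisymmetrized coherently without cross-talk; tracking this carefully produces the stated width $\tilde{O}(\eta^2 d^2\log N)$ and preserves constant depth for the antisymmetrization, so the overall depth stays $O(\log d)$.

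The main obstacle I expect is precisely this lift of Theorem \ref{thm:single_symmetric} from a single fixed $\vec{r}$ to a superposition of $d$ distinct ordered tuples: Lemma \ref{lem:parity}'s parity phase is obtained by pairwise comparisons and inversion counting on position ancillas, and naively reusing one set of ancillas across all $d$ branches would leave garbage that entangles the branches and kills the coherent sum. I would resolve this by carrying $d$ independent ancilla stacks, entangled with the branch index via the \textit{Uncompress} primitive of Lemma \ref{lem:uncompress}, which is exactly what drives the $d^2$ factor in the width for the $O(\log d)$ construction. A secondary, purely bookkeeping issue in the $O(d)$ construction is arranging the uncomputation of the coefficient register \emph{before} antisymmetrization so that the only ancillas still entangled with the data register at the end are those already counted in $\tilde{O}(\eta^2(\eta\log\eta+\log N))$; neither of these introduces a conceptual difficulty beyond the subroutines developed in Sections \ref{sec:routines_for_parallelization} and \ref{sec:antisymmetric_quantum_states}.
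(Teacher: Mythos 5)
Your decomposition runs in the opposite direction from the paper's, and the reversal is where the gap lies. The paper never prepares the ordered sparse state $\sum_i\psi_i\ket{r_0^i}\cdots\ket{r_{\eta-1}^i}$ at all: it first prepares, once, the single antisymmetric state over the \emph{indices} $0,\dots,\eta-1$ (Theorem \ref{thm:single_symmetric}), together with a coefficient register $\sum_i\psi_i\ket{i}$, and then converts indices into the classically known values $\vec{r}^i$ branch-by-branch — sequentially (one branch per constant-depth round) for the $O(d)$-depth version, or in parallel via a unary-encoded coefficient register for the $O(\log d)$-depth version, with a substantial amount of work devoted to uncomputing the branch-revealing garbage. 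Your route — prepare $\ket{\Psi_{\mathrm{ord}}}$ first, then ``apply the antisymmetrization subroutine underlying Theorem \ref{thm:single_symmetric} in superposition'' — is exactly the alternative the paper compares against, and the cited machinery cannot do what you ask of it: Theorem \ref{thm:single_symmetric} is a \emph{preparation} procedure for a single, classically known tuple, not a transformation acting on an input register. Lemma \ref{lem:filling_filtering} generates its own symmetrized tuple from scratch (and involves a measurement that collapses to one tuple), and the subsequent \textit{Compress}/\textit{Uncompress} conversion erases and writes values using classical knowledge of the specific tuple. Applied to a superposition of $d$ distinct tuples, those erasure steps are ill-defined unless they are made branch-controlled, and supplying that branch control plus the disentangling of the resulting index/garbage registers is precisely the content of the paper's proof (Appendix C), not a bookkeeping afterthought. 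Your remark that replicating $d$ ancilla stacks ``preserves constant depth'' is asserted, not shown; the only superposition-compatible antisymmetrizer actually available (quantum bitonic sort, Ref.~\cite{berry2018improved}) has depth $\tilde{O}(\log^2\eta)$, which is why the paper lists the ``Theorem \ref{thm:sparse} + antisymmetrization'' route as a separate entry in Table \ref{tab:summary} with depth $\tilde{O}(\log^2\eta)$ rather than $O(\log d)$. (Lemma \ref{lem:parity} itself would be fine in superposition — the inversion-counting circuit is coherent — but it presupposes that the permutation superposition of the data values has already been created, which is the step you have not provided.)

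There is also a quantitative slip in your $O(d)$-depth variant: you prepare the coefficient register with the $O(\log d)$-depth, $O(d)$-width method, which by itself already exceeds the claimed width budget $\tilde{O}(\eta^2(\eta\log\eta+\log N)+\log d)$ whenever $d$ dominates; the paper instead uses the $O(d/\log d)$-depth, $O(\log d)$-width preparation of Ref.~\cite{sun2023asymptotically}, which is compatible with the $O(d)$ total depth and the $+\log d$ width term. The sequential ``flag-and-inject'' idea in that variant is in the right spirit (it mirrors the paper's per-branch $Equal$-flag plus controlled \textit{Uncompress}/\textit{Compress} loop), but because you inject the raw values and defer antisymmetrization to the end, you again land on the unresolved problem of antisymmetrizing a superposition of ordered tuples in constant depth; in the paper's ordering that problem never arises, since the permutation structure is created once on the index register before any branch-dependent values are written.
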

\begin{proof}
See Appendix \ref{app:proof_sos}.
\end{proof}
We now compare our algorithm with other approaches. The same state can be constructed by first preparing $\sum_{i=0}^{d-1}\psi_i \ket{r_0^i}\dots\ket{r_{\eta-1}^i}$ and subsequently performing (anti)symmetrization. The first part corresponds to sparse quantum state preparation, which has been extensively studied in the literature \cite{zhang2022quantum, luo2024circuit, mao2024toward, ramacciotti2024simple}. Since we propose two versions of our SOS preparation algorithm, we compare it with two methods for sparse quantum state preparation. For circuit depths logarithmic in $d$, we choose our Theorem \ref{thm:sparse} for sparse quantum state preparation. For circuit depths polynomial in $d$, sparse quantum states can be prepared using a quantum circuit with an overall Toffoli cost of $O(d\log d)$ and an additional qubit cost of $O(\log d)$ \cite{fomichev2023initial}. Notably, the authors in Ref. \cite{fomichev2023initial} adopted an end-to-end perspective, focusing on the preparation of initial states rather than solely on sparse quantum state preparation. They also proposed a method for preparing sparse quantum state with a Toffoli cost of $\min(2\sqrt{32\eta\log N d}, d)+(7\log d+2\sqrt{32\log d})\sqrt{d}$ with a clean qubit cost of $2\sqrt{d}\log d$. The second part, (anti)symmetrization, can be implemented using a quantum circuit with a depth of $O(\log^2\eta \log\log N)$ and a worst-case width of $O(\eta \log^2 \eta \log\log N)$ by employing quantum bitonic sort \cite{berry2018improved}.

Assume that $d=O(\eta)$, and we choose Theorem \ref{thm:sparse} to prepare sparse quantum state and perform antisymmetrization using the method proposed in Ref. \cite{berry2018improved}. In this approach, the dominant circuit depth is $\tilde{O}(\log^2\eta)$, which exceeds the depth of $O(\log d)$ achieved by Theorem \ref{thm:sos}. The circuit width of this approach is $\tilde{O}(\eta d\log N \log \eta)$ and ours is $\tilde{O}(\eta^2 d^2\log N).$ For the case where $d=\rm{poly}(N)$, a sparse quantum state can be prepared using the algorithm in Ref. \cite{fomichev2023initial}, followed by antisymmetrization. Although the result cannot be directly compared due to differences in measurement metrics, we emphasize that our Theorem \ref{thm:sos} provides significant value by focusing on the extreme reduction of circuit depth through measurements and feedforward.

Until now, we have represented our SOS on a simple molecular orbital basis. This basis is one of the most intuitive and widely used bases, with significant value in many applications \cite{tubman2018postponing,georges2024quantum}. However, it is known that this simple basis form can lead to a dense Hamiltonian representation, which imposes a substantial overhead for the quantum circuits to process \cite{babbush2018low, kivlichan2018quantum}. From this perspective, the plane wave basis has garnered attention due to its sparse representation and the reduced circuit depth required for its implementation in the first quantization scheme \cite{babbush2019quantum, su2021fault}. We note that the state-of-the-art algorithm proposed in Ref. \cite{huggins2024efficient}, which transforms molecular orbital basis to the plane-wave basis, can be directly applied to our SOS as defined Eq. (\ref{eq:SOS}).
% 음 이거 논문이 이내용이 맞나? 내일 다시보자. gaussian basis에서 변환? molecular orbital?
% --------------------
\section{Bethe wavefunction}\label{sec:bethe_ansatz}
\subsection{About Bethe Ansatz}
The Bethe ansatz was initially proposed by Hans Bethe in 1931 \cite{bethe1931theorie} as an analytic solution for the spin-$\frac{1}{2}$ antiferromagnetic Heisenberg chain. This groundbreaking proposal provided a novel approach to address interacting Hamiltonians beyond free Hamiltonians. Following this monumental study, it was discovered that the Bethe ansatz could be applied to a wide range of systems, such as the XXZ chain \cite{yang1966one}, Bose gas \cite{lieb1963exact}, the Kondo model \cite{andrei1983solution}, and the Hubbard model \cite{lieb1968absence}. To address these numerous applications, several variants of the Bethe ansatz have been developed, such as coordinate Bethe ansatz (the original formulation) \cite{giamarchi2003quantum}, the algebraic Bethe ansatz \cite{slavnov2018algebraic}, and thermodynamic Bethe ansatz \cite{van2016introduction}. Given its versatility and profound influence across various fields, we recommend several references for readers interested in exploring the Bethe ansatz in greater depth \cite{karbach1998introduction, faddeev1995algebraic, takahashi1999thermodynamics, levkovich2016bethe}. 

With advancements in quantum information theory and experimental techniques, it has become possible to treat Bethe ansatz on a quantum computer. While the analytic form of given Hamiltonian's eigenstates can be determined through the Bethe ansatz, extracting high-order correlations or complex physical properties from this analytic form remains challenging \cite{giamarchi2003quantum}. Furthermore, highly excited eigenstates within the Bethe ansatz are challenging to express as matrix product states, which are among the most influential numerical tools for studying 1D quantum systems \cite{ruiz2024bethe1}. In this context, several studies have explored the preparation of wavefunction from the Bethe ansatz on a quantum computer, primarily based on the algebraic Bethe ansatz \cite{sopena2022algebraic, ruiz2024bethe1,ruiz2024bethe2}, the coordinate Bethe ansatz \cite{nepomechie2020bethe, van2021preparing, van2022preparing, li2022bethe, raveh2024deterministic, raveh2024estimating, sahu2024fractal}, and the thermodynamic Bethe ansatz \cite{lutz2025adiabatic}. In this section, we focus on preparing a Bethe wavefunction which is closely related to the coordinate Bethe ansatz on a quantum computer using a constant-depth quantum circuit with measurements and feedforward.

Let $L, M$ be the total number of sites and particles, respectively. Then, we follow the definition of the Bethe wavefunction proposed in Ref. \cite{sahu2024fractal} as follows,
\begin{equation}\label{eq:bethe_wavefunction}
    \ket{\psi(\bm{\theta},\vec{k})}  =\sum_{\vec{x}\in \Xi_{M}^L}\sum_{\sigma\in S_\eta}  A_\sigma(\bm{\theta}) e^{i\sum_{l=0}^{M-1} k_{\sigma(l)}x_l}  |\bigoplus_{m=0}^{M-1}e_{x_m}\rangle_L,
\end{equation}
where the coefficients $A_\sigma(\bm{\theta})$ are defined as
\begin{equation}
     A_\sigma(\bm{\theta}) = \exp\bigg(\frac{i}{2}\sum_{l<m}\theta_{\sigma(l),\sigma(m)}\bigg).\label{eq:A}
\end{equation}
We omit the normalization factor of $|\psi(\bm{\theta},\vec{k})\rangle$, due to its length expression. We denote the $M \times M$ matrix $\bm{\theta}$ as scattering matrix, and $\vec{k}$ as quasi momentum or rapidity. We only consider real-valued $\bm{\theta}$ and $\vec{k}$, and assume that they are obtained by solving Bethe equations. Several eigenstates of integrable Hamiltonian can be expressed in the form of Eq. (\ref{eq:bethe_wavefunction}) under the coordinate Bethe ansatz including spin-1/2 periodic XXZ Hamiltonian. However, applicability of our algorithm is not in the integrability, just in the structure of wavefunction and real variables. Of course, integrability is deeply related to two-body reducibility so there is some connection with the above wavefunction. Note that the quantum state without phases corresponds to the Dicke-$(L, M)$ state \cite{bartschi2019deterministic}, which is represented as
\begin{equation}
    \ket{\psi_{\mathrm{Dicke}}} =\frac{1}{\sqrt{\binom{L}{M}}} \sum_{\vec{x}\in \Xi_{M}^L} \ket{\bigoplus_{m=0}^{M-1}e_{x_m}}.
\end{equation}
The Dicke-$(L, M)$ state is a superposition of $L$-qubit states containing $M$ ones and $L-M$ zeros, and is the generalization of the W state. 

\subsection{Preparing Bethe wavefunction}
As observed from Eq. (\ref{eq:bethe_wavefunction}), the Bethe wavefunction possesses a phase with a degree of freedom that increases exponentially with the number of down spins. Despite this huge degree of freedom, the authors in Ref. \cite{van2021preparing} proposed an explicit algorithm for preparing Eq. (\ref{eq:bethe_wavefunction}) using a quantum circuit of depth $O(L)$, $O(M^2)$ auxiliary qubits, and a success probability of $O(1/M!)$ \cite{li2022bethe}. Additionally, the authors in Ref. \cite{raveh2024deterministic} proposed a deterministic algorithm to prepare the Bethe wavefunction using a quantum circuit of depth $O({L \choose M})$. In this section, we propose a constant-depth quantum circuit utilizing measurements and feedforward to prepare the Bethe wavefunction in Eq. (\ref{eq:bethe_wavefunction}), specifically for real-valued $\bm{\theta}$ and $\vec{k}$. 

To prepare the state $\ket{\psi(\bm{\theta},\vec{k})}$, we first require information about permutations and the locations of down spins encoded as a quantum state.
\begin{lemma}\label{lem:resource_state}
    For two positive integers $M$ and $L$ satisfying $M\leq L/2$, the quantum state
    \begin{align}\label{eq:resource_state}
    \nonumber
        &\frac{1}{\sqrt{\binom{M^2}{M}\binom{L}{M}M!}}\sum_{\vec{j}\in \Xi_{M}^{M^2}}\sum_{\sigma \in S_M} \bigg[\ket{j_{\sigma(0)}} \dots \ket{j_{\sigma(M-1)}}  \ket{\sigma(0)}  \\ 
        &\dots \ket{\sigma(M-1)}\ket{\bigoplus_{l=0}^{M-1}e_{j_l}}  \sum_{\vec{x}\in \Xi_{M}^L} \ket{x_0}\dots\ket{x_{M-1}}  |\bigoplus_{m=0}^{M-1}e_{x_m}\rangle\bigg]
    \end{align} 
    can be prepared using constant-depth quantum circuits of width $\tilde{O}(L^2\log L)$ with measurements and feedforward. This preparation can be achieved through a quantum circuit of width $\tilde{O}(L^2\log L)$ for exact Dicke-$(L,M)$ state with the constraint of $M = O(\sqrt{L})$ \cite{buhrman2024state} or a quantum circuit of width $O(Ll_{M,\epsilon})$ for the Dicke-$(L,M)$ state with the infidelity of $\epsilon$, and probability of $O(1/\sqrt{M})$ \cite{piroli2024approximating}, where $l_{M,\epsilon}$ is given by
        \begin{equation}
        l_{M,\epsilon}=\max \bigg\{ \log_2 (4M), 1+\log_2\ln(\sqrt{8\pi M}/\epsilon) \bigg\}.
    \end{equation}
\end{lemma}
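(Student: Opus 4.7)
The key structural observation is that the target state factorises as a tensor product: the inner sum $\sum_{\vec{x}\in \Xi_{M}^{L}}\cdots$ is independent of $(\vec{j},\sigma)$ and acts on registers disjoint from the outer sum. I will therefore prepare the two factors $\ket{\psi_A}\propto\sum_{\vec{j},\sigma}\ket{j_{\sigma(0)}}\dots\ket{j_{\sigma(M-1)}}\ket{\sigma(0)}\dots\ket{\sigma(M-1)}\ket{\bigoplus_l e_{j_l}}$ and $\ket{\psi_B}\propto\sum_{\vec{x}\in\Xi_M^L}\ket{x_0}\dots\ket{x_{M-1}}\ket{\bigoplus_m e_{x_m}}$ in parallel on disjoint registers, so that the total width is the sum of the two contributions.

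For $\ket{\psi_A}$, I would invoke Lemma~\ref{lem:filling_filtering} with $\eta=M$ and $\zeta=M^2$ (where the prerequisite $\eta^2\le\zeta$ is tight), which directly produces
\begin{equation*}
\frac{1}{\sqrt{\binom{M^2}{M}M!}}\sum_{\vec{j}\in\Xi_M^{M^2}}\sum_{\sigma\in S_M}\ket{j_{\sigma(0)}}\dots\ket{j_{\sigma(M-1)}}\ket{\bigoplus_l e_{j_l}}
\end{equation*}
in constant depth with width $\tilde{O}(M^{3}\log M)$. All that remains is to write $\ket{\sigma(0)}\dots\ket{\sigma(M-1)}$ into a fresh register. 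The key identity is $\sigma(k)=|\{l:j_{\sigma(l)}<j_{\sigma(k)}\}|$, so after fanning out the ordered-tuple register I can apply the $O(M^2)$ pairwise \emph{Greaterthan} comparators of Table~\ref{tab:operations} in parallel and then reduce each row with a parallel \emph{Hammingweight} gate; both gates are constant depth, contributing only $\tilde{O}(M^{2}\log^{2}M)$ additional width. The auxiliary comparison bits are uncomputed by running the comparators in reverse.

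For $\ket{\psi_B}$, I would first prepare the Dicke-$(L,M)$ state via one of the two subroutines cited in the statement, which gives $\sum_{\vec{x}\in\Xi_M^L}\ket{\bigoplus_m e_{x_m}}$ on the unary register alone. The ordered positions are then extracted coherently by a prefix-count strategy: for each $i\in\{0,\dots,L-1\}$ I compute in parallel the prefix Hamming weight $c_i:=|\{i'<i:(\bigoplus_m e_{x_m})_{i'}=1\}|$ using $L$ parallel \emph{Hammingweight} calls on fanned-out copies of the Dicke register, so that the $k$-th occupied site is singled out by the conjunction ``bit~$i$ is $1$ and $c_i=k$''. Uncompressing each $c_i$ to unary via Lemma~\ref{lem:uncompress}, all $L\cdot M$ conditional writes of the classical constant $i$ into slot $\ket{x_k}$ can be issued in parallel, after which the $c_i$ registers are uncomputed by running their circuits in reverse.

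The main technical obstacle is the parallel execution of the $L$ conditional writes that share a common target $\ket{x_k}$: layering them naively would serialise on that target. I bypass this by rewriting each target bit as a parity over the controls, $x_k[j]=\bigoplus_{i:i_j=1}\bigl((\bigoplus_m e_{x_m})_i\wedge (e_{c_i})_k\bigr)$, which is realised by a single constant-depth fan-out-enabled parity circuit and contributes only $\tilde{O}(LM\log L)$ width. Tallying everything, Part~A costs $\tilde{O}(M^{3}\log M)+\tilde{O}(M^{2}\log^{2}M)$ and Part~B is dominated by the Dicke preparation and prefix-count extraction at $\tilde{O}(L^{2}\log L)$; in the $M=O(\sqrt{L})$ regime assumed for the exact Dicke subroutine (and, analogously, with the approximate Dicke variant for general $M\le L/2$), Part~B dominates and yields the claimed $\tilde{O}(L^{2}\log L)$ width in constant depth.
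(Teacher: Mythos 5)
Your proposal is correct and follows the same skeleton as the paper's proof: factor the state into the $(\vec{j},\sigma)$ part and the Dicke part, obtain the former from Lemma~\ref{lem:filling_filtering} with $\eta=M$, $\zeta=M^2$, prepare the Dicke-$(L,M)$ state by one of the two cited subroutines, and then append the extra registers coherently. Where you differ is in how the two ``extra'' registers are produced. The paper obtains $\ket{\sigma(0)}\dots\ket{\sigma(M-1)}$ by reusing the machinery from the second proof of Theorem~\ref{thm:sos}, and obtains the ordered positions $\ket{x_0}\dots\ket{x_{M-1}}$ by applying the \emph{inverse Cleaning} operation of Ref.~\cite{buhrman2024state} to the Dicke register; you instead build both steps explicitly from the Table~\ref{tab:operations} primitives --- the rank identity $\sigma(k)=|\{l: j_{\sigma(l)}<j_{\sigma(k)}\}|$ realized by parallel \emph{Greaterthan} comparators plus \emph{Hammingweight} (width $\tilde{O}(M^2\log^2 M)$, correctly preserving the correlation between the $\sigma$ register and the $j$ slots), and a prefix-count extraction with the parity rewrite $x_k[j]=\bigoplus_{i:i_j=1}\bigl(b_i\wedge(e_{c_i})_k\bigr)$, which is a legitimate constant-depth, self-contained substitute for inverse Cleaning at the same $\tilde{O}(L^2\log L)$ width. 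So your route is more elementary and avoids two black-box citations, at the cost of more bookkeeping. One small caveat: your closing claim that ``Part B dominates'' for general $M\le L/2$ under the approximate-Dicke variant is not literally true, since the Filling/Filtering width $\tilde{O}(M^3\log M)$ can exceed $\tilde{O}(L^2\log L)$ when $M\gg L^{2/3}$; this caveat is inherited from the paper's own accounting (which likewise quotes $O(M^3\log M)$ for the first part and $\tilde{O}(L^2\log L)$ overall) and is harmless in the $M=O(\sqrt{L})$ regime, but it is worth stating the restriction rather than asserting dominance.
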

\begin{proof}
    See Appendix \ref{app:proof_of_resource_state}.
\end{proof}
If the second subroutine in Lemma \ref{lem:resource_state} is chosen, the approximation can be implemented with high precision due to its $\log\log$ dependence on $1/\epsilon$. Furthermore, the probabilistic nature of the preparation is not a significant issue, as our Bethe state preparation is inevitably probabilistic to avoid exponentially large circuit depths.  

The next step is to attach the phases defined in Eq. (\ref{eq:bethe_wavefunction}).
\begin{lemma}\label{lem:phase_k}
    For two positive integers $M$ and $L$ satisfying $M \leq L$, the transformation
    \begin{align}
        &\frac{1}{\sqrt{\binom{L}{M}M!}}\sum_{\sigma\in S_M}\sum_{\vec{x}\in\Xi_M^L} \ket{e_{\sigma(0)}}\dots\ket{e_{\sigma(M-1)}} \ket{e_{x_0}} \dots\ket{e_{x_{M-1}}} \\ \nonumber
        &\longrightarrow  \frac{1}{\sqrt{\binom{L}{M}M!}}\sum_{\sigma\in S_M}\sum_{\vec{x}\in \Xi^L_M} e^{ i\sum_{l=0}^{M-1} k_{\sigma(l)}x_l }\bigg[\ket{e_{\sigma(0)}} \\
        &\qquad\qquad\qquad\qquad\qquad~~\dots\ket{e_{\sigma(M-1)}}
        \otimes\ket{e_{x_0}}\dots\ket{e_{x_{M-1}}}\bigg]
    \end{align}
    can be implemented using a constant-depth quantum circuit of width $O(M^2 L)$ with measurements and feedforward. 
\end{lemma}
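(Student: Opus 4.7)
The plan is to exploit that the target unitary factors across $l\in\{0,\dots,M-1\}$: on the $l$-th pair of unary registers $\ket{e_{\sigma(l)}}_M\ket{e_{x_l}}_L$ it applies the diagonal phase $e^{i k_{\sigma(l)}x_l}$, and these $M$ diagonal unitaries act on disjoint qubits, so they can be performed in parallel. It therefore suffices to implement the single-$l$ operation $V_l$ in a constant-depth subcircuit using $O(ML)$ auxiliary qubits; summing over $l$ then gives the claimed $O(M^2L)$ width.

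For a fixed $l$, the key observation is that on the unary-encoded subspace $V_l$ is equal to the product of two-qubit controlled-phase gates $\prod_{j=0}^{M-1}\prod_{m=0}^{L-1}\mathrm{CP}(k_j m)$, where the $(j,m)$ factor acts between qubit $j$ of the $\sigma$-register and qubit $m$ of the $x$-register. For a unary input $\ket{e_{j^\ast}}\ket{e_{m^\ast}}$, exactly one pair of controls is simultaneously $\ket{1}$, so the accumulated phase collapses to $e^{i k_{j^\ast} m^\ast}=e^{i k_{\sigma(l)}x_l}$ and every other factor acts as the identity. These $ML$ gates are all diagonal, hence pairwise commuting, so the only obstruction to executing them in a single depth-one layer is the sharing of qubits.

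To break the qubit-sharing, I would use the constant-depth quantum fan-out (Lemma \ref{lem:clifford} together with the fan-out row of Table \ref{tab:operations}) to duplicate each of the $M$ qubits of the $\sigma$-register into $L$ copies and each of the $L$ qubits of the $x$-register into $M$ copies, producing $ML$ disjoint $(j,m)$ pairs on fresh auxiliary qubits with total width $O(ML)$ per $l$. The $ML$ controlled-phase gates $\mathrm{CP}(k_j m)$ can then be executed simultaneously in a single layer, after which the fan-outs are uncomputed by reapplying the inverse constant-depth fan-out circuits, returning the auxiliaries to $\ket{0}$. The main subtlety I expect to have to verify is that this uncomputation genuinely disentangles the copies from the original registers; because $\mathrm{CP}$ is diagonal and each fanned-out qubit is a classical copy of a computational-basis value in every branch, the accrued phase depends only on that value and the fan-out can be reversed coherently without back-action. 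Once this bookkeeping is written out, the depth, width, and phase accounting in the statement all follow directly from the fan-out and parallel-application counts.
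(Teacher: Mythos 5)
Your proposal is correct and follows essentially the same route as the paper's proof: exploit the unary encoding so that for each $l$ the phase $e^{ik_{\sigma(l)}x_l}$ is realized by the commuting family of two-qubit controlled-phase gates $e^{ik_jm}$ between qubit $j$ of the $\sigma(l)$-register and qubit $m$ of the $x_l$-register, then parallelize via constant-depth fan-out copies ($L$ copies of the $M$-qubit register and $M$ copies of the $L$-qubit register per $l$) and uncompute, giving width $O(ML)$ per $l$ and $O(M^2L)$ overall. The only cosmetic difference is that you fan out qubit-by-qubit into disjoint $(j,m)$ pairs while the paper duplicates whole registers, which amounts to the same resource count and the same diagonal-gate commutation argument.
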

\begin{proof}
 See Appendix \ref{app:proof_of_phase_k}. 
\end{proof}
To attach the phase $A_\sigma$ to the quantum state, the inverse of the permutation must be encoded in advance as follows.
\begin{lemma}\label{lem:inverse_of_permutation}
For a positive integer $M$ and a permutation $\sigma\in S_M$, let $\sigma^{-1}$ denote the inverse of permutation $\sigma$, satisfying $\sigma^{-1}(\sigma(i))=i$ for all $i$. Then, the transformation
\begin{align}
    &\frac{1}{\sqrt{M!}}\sum_{\sigma\in S_M}\ket{\sigma(0)} \dots \ket{\sigma(M-1)}\ket{0}_{\log M}  \dots \ket{0}_{\log M}\\ \nonumber
    &~~\longrightarrow  \frac{1}{\sqrt{M!}}\sum_{\sigma\in S_M}\bigg[\ket{\sigma(0)}\dots \ket{\sigma(M-1)}\\
    & ~~\qquad\qquad\otimes\ket{\sigma^{-1}(0)}_{\log M} \dots \ket{\sigma^{-1}(M-1)}_{\log M}\bigg]
\end{align}
can be implemented using a constant-depth quantum circuit of width $\tilde{O}(M^2 \log M)$ with measurements and feedforward.
\end{lemma}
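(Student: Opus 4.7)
The plan is to write $\sigma^{-1}(i)$ into the $i$-th fresh output register by (i) running $M^2$ equality tests in parallel that locate, for each target value $i$, the unique position $j$ with $\sigma(j)=i$, and (ii) using those single-bit outcomes as controls to write the binary representation of $j$ into the $i$-th output register. Every primitive I invoke is a constant-depth subroutine from Table~\ref{tab:operations} (fan-out, $Equal$, and fan-in parity), so the whole procedure is constant depth with measurements and feedforward.

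First I would fan out each of the $M$ input permutation registers $M$ times, producing $M^2$ private copies of $\log M$ qubits, where copy $(i,j)$ holds $\sigma(j)$. By Lemma~\ref{lem:clifford} and Table~\ref{tab:operations} this uses $O(M^2 \log M)$ ancillas in constant depth. Next, apply $M^2$ parallel $Equal_i$ gates, where the $(i,j)$ gate tests its private copy of $\sigma(j)$ against $i$ and flips a single-qubit ancilla $b_{i,j}$ accordingly. Each $Equal$ gate has width $O(\log M \log\log M)$ and they all act on disjoint registers, so this layer has width $\tilde{O}(M^2\log M)$ and constant depth.

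Third, for each row $i$ exactly one $b_{i,j}$ equals $1$, namely $j=\sigma^{-1}(i)$; hence the $k$-th bit of $\sigma^{-1}(i)$ is $\bigoplus_{\,j\,:\,j_k=1\,} b_{i,j}$, where $j_k$ denotes the $k$-th bit of $j$. I would compute these $M\log M$ parities in parallel into the $M$ output registers by CNOT cascades from the appropriate $b_{i,j}$ qubits onto each target; these many-to-one parities are exactly the fan-in dual of the quantum fan-out and lie in $\mathrm{QNC}^0_{\mathrm{f}}$, so they run in constant depth. Finally, uncompute the $b_{i,j}$ by rerunning the $Equal$ gates, and undo the input-register fan-outs; both reversals are constant depth.

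The main obstacle is the accounting rather than the construction: I must verify that the $Equal$-gate layer truly dominates at $\tilde{O}(M^2\log M)$, given that the duplicated input copies contribute $O(M^2\log M)$, the $b_{i,j}$ ancillas contribute only $O(M^2)$, and the fan-in parity layer reuses the $b_{i,j}$ qubits and the $M\log M$ output qubits with no additional asymptotic cost. The only correctness point that needs emphasis is the identity $\mathrm{bit}_k(\sigma^{-1}(i))=\bigoplus_{\,j\,:\,j_k=1\,} b_{i,j}$, which holds because $\sigma$ is a bijection and so, on every branch of the superposition, exactly one $b_{i,j}$ in each row is $1$.
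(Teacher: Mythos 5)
Your construction is correct, and at its core it is the same idea as the paper's: the matrix of indicator bits $b_{i,j}=\mathds{1}_{\sigma(j)=i}$ that you build with fan-out copies plus parallel $Equal_i$ gates is exactly the collection of unary encodings the paper produces with \textit{Uncompress} (column $j$ is $e_{\sigma(j)}$, row $i$ is $e_{\sigma^{-1}(i)}$), so both proofs reduce the problem to reading this matrix transposed. Where you genuinely diverge is the second half: the paper copies the rows into fresh $M$-qubit inverse registers via a fixed CX pattern, obtaining $\ket{e_{\sigma^{-1}(i)}}$, and then converts unary to binary and cleans up with the inverse \textit{Uncompress}; you instead skip the unary inverse registers entirely and write the binary digits directly through the identity $\mathrm{bit}_k(\sigma^{-1}(i))=\bigoplus_{j:\,\mathrm{bit}_k(j)=1} b_{i,j}$, implemented with constant-depth fan-in (parity) gates, then uncompute. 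One point you gloss over: for a fixed $i$ the $\log M$ bit-parities share the source qubits $b_{i,j}$, so they cannot be realized as a single layer of disjoint Hadamard-conjugated fan-outs as written; you should first fan each $b_{i,j}$ out into $O(\log M)$ copies (equivalently, implement the bipartite CNOT layer by duplicating controls), which costs another $O(M^2\log M)$ ancillas and therefore leaves both the constant depth and the $\tilde{O}(M^2\log M)$ width bound intact. In exchange, your route avoids the extra $M^2$ qubits for the unary inverse registers and one round of \textit{Uncompress}$^{-1}$, while the paper's version is more modular in that it reuses \textit{Uncompress}/\textit{Compress} as black boxes consistent with its other lemmas.
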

\begin{proof}
See Appendix \ref{app:proof_of_inverse_of_permutation}. The main components of the quantum circuit are depicted in Fig. \ref{fig:inverse_permutation}.
\end{proof}
\begin{figure}
    \centering
    \includegraphics[width=1.0\linewidth]{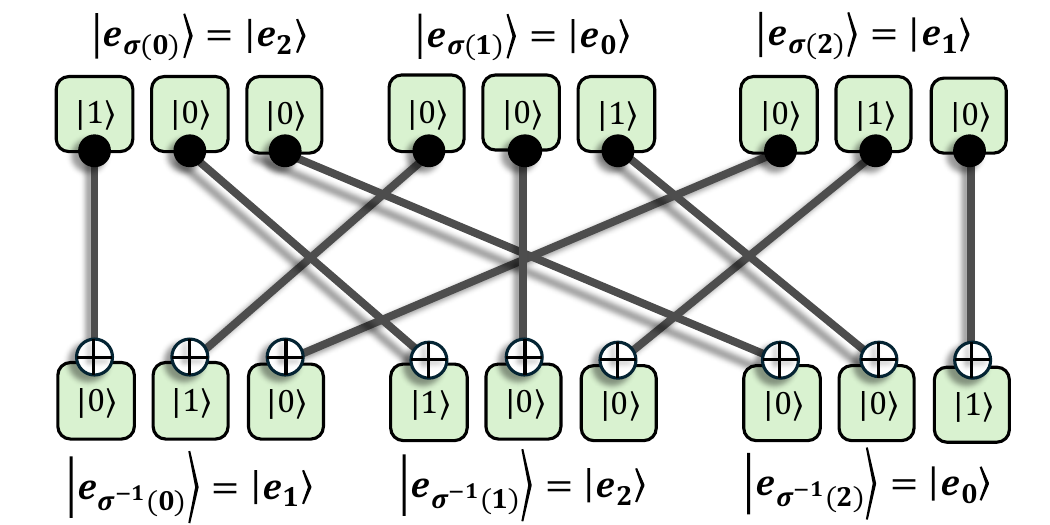}
    \caption{Quantum circuit for encoding inverse of permutation $\sigma$.}
    \label{fig:inverse_permutation}
\end{figure}
This encoded inverse of the permutation enables us to attach $A_\sigma$ to the corresponding quantum state.
\begin{lemma}\label{lem:phase_A}
    For a positive integer $M$, attaching the phase
    \begin{align}
        &\frac{1}{\sqrt{M!}}\sum_{\sigma\in S_M} \ket{\sigma(0)} \dots \ket{\sigma(M-1)} \\
        &~~\longrightarrow  \frac{1}{\sqrt{M!}}\sum_{\sigma\in S_M} A_\sigma \ket{\sigma(0)} \dots  \ket{\sigma(M-1)} \label{eq:A_second}
    \end{align}
    can be implemented using a constant-depth quantum circuit of width $O(M^2 \log^2 M)$ with measurements and feedforward, where $A_\sigma$ is defined in Eq. (\ref{eq:A}). 
\end{lemma}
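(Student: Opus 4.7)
Proof proposal:

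The plan is to rewrite the exponent of $A_\sigma$ as a sum over unordered pairs of \emph{values} and reduce the phase attachment to a parallel layer of single-qubit phases controlled by comparison bits of $\sigma^{-1}$. For each unordered pair $\{a,b\}$ with $a<b$, exactly one pair of positions $(l,m)$ with $l<m$ satisfies $\{\sigma(l),\sigma(m)\}=\{a,b\}$, and this pair contributes $\theta_{a,b}$ if $\sigma^{-1}(a)<\sigma^{-1}(b)$ and $\theta_{b,a}$ otherwise. Setting $c_{a,b}:=\mathds{1}_{\sigma^{-1}(a)<\sigma^{-1}(b)}$, I therefore have
\[
\frac{1}{2}\sum_{l<m}\theta_{\sigma(l),\sigma(m)}
=\frac{1}{2}\sum_{a<b}\theta_{b,a}+\frac{1}{2}\sum_{a<b}c_{a,b}\bigl(\theta_{a,b}-\theta_{b,a}\bigr),
\]
so $A_\sigma$ factorizes as a $\sigma$-independent global phase times $\prod_{a<b}\exp\bigl(ic_{a,b}(\theta_{a,b}-\theta_{b,a})/2\bigr)$, a product of phases each depending on a \emph{single} bit.

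From this decomposition I would build the circuit in five stages. First, apply Lemma~\ref{lem:inverse_of_permutation} to obtain ancillary registers encoding $\sigma^{-1}(0),\dots,\sigma^{-1}(M-1)$ in binary on $\log M$ qubits each; this is constant-depth with width $\tilde{O}(M^2\log M)$. Second, fan out each $\sigma^{-1}(a)$ register into $M-1$ disjoint copies using the constant-depth construction of Lemma~\ref{lem:clifford}, so that every pair $\{a,b\}$ owns its own private inputs; this costs $O(M^2\log M)$ additional qubits. Third, run the \textit{Greatherthan} operation of Table~\ref{tab:operations} in parallel over all $\binom{M}{2}$ pairs to write each $c_{a,b}$ onto a fresh ancilla; since each such comparison on $\log M$-bit inputs uses width $O(\log^2 M)$ and distinct comparisons touch disjoint qubits, this layer has total width $O(M^2\log^2 M)$. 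Fourth, apply in parallel the single-qubit diagonal phase $\mathrm{diag}\bigl(1,e^{i(\theta_{a,b}-\theta_{b,a})/2}\bigr)$ to each $c_{a,b}$ ancilla, together with the fixed global factor $\exp\bigl(i\sum_{a<b}\theta_{b,a}/2\bigr)$. Fifth, uncompute the \textit{Greatherthan} outputs, the fan-out copies, and the $\sigma^{-1}$ registers by running the corresponding constant-depth subroutines in reverse, so that only the target phase $A_\sigma$ remains imprinted on the permutation registers.

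Every stage is constant-depth by Lemma~\ref{lem:clifford}, Lemma~\ref{lem:inverse_of_permutation}, and the $\mathrm{QNC}_f^0$ membership of \textit{Greatherthan} recorded in Table~\ref{tab:operations}, and the overall width is dominated by the parallel \textit{Greatherthan} layer at $O(M^2\log^2 M)$, matching the claim. I expect the main obstacle to be the combinatorial rewriting step that rephrases the position-pair sum as a value-pair sum governed by the relative order of inverse-permutation entries, because without this reduction one is tempted to uncompress both $\sigma(l)$ and $\sigma(m)$ for every $(l,m)$ pair and attach $M^2$ doubly-controlled phases per pair, which inflates the width to $\Omega(M^3)$ once the shared unary qubits are fanned out. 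After this reformulation, the rest of the construction is essentially a book-keeping exercise using already-established constant-depth primitives from the preliminaries.
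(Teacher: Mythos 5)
Your proposal is correct and follows essentially the same route as the paper's proof: encode $\sigma^{-1}$ via Lemma~\ref{lem:inverse_of_permutation}, fan out the registers, compute all $\binom{M}{2}$ comparison bits with parallel \textit{Greatherthan} gates (the $O(M^2\log^2 M)$-width bottleneck), imprint the phases with single-qubit diagonal gates on those bits, and uncompute. The only cosmetic difference is that you factor each pair's phase into a $\sigma$-independent global factor times a one-bit-controlled phase, whereas the paper applies the two-outcome diagonal gate $P_\theta(i,j)$ directly; your explicit derivation of the value-pair identity is a nice spelling-out of the rewriting the paper states only informally.
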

\begin{proof}
 See Appendix \ref{app:proof_of_phase_A}.
\end{proof}
By combining these lemmas, we now arrive at the final theorem
\begin{theorem}\label{thm:bethe}
    The $L$-sites $M$-particles Bethe wavefunction $\ket{\psi(\bm{\theta},\vec{k})}$ satisfying $M \leq L/2$, can be prepared using a constant-depth quantum circuit of width $\tilde{O}(L^2\log L)$ with measurements and feedforward. This circuit can be structured by preparing either an exact Dicke-$(L,M)$ state with the probability of $O(1/M !)$ and the constraint $M=O(\sqrt{L})$ or an approximate Dicke-$(L,M)$ state with the probability of $O(1/M!)$, infidelity of $O(1/2^{L^L})$ and no constraint.
\end{theorem}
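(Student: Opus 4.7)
The plan is to chain the four lemmas from this section — the resource-state preparation of Lemma \ref{lem:resource_state}, the momentum-phase attachment of Lemma \ref{lem:phase_k}, the inverse-permutation computation of Lemma \ref{lem:inverse_of_permutation}, and the scattering-phase attachment of Lemma \ref{lem:phase_A} — into one constant-depth pipeline, and then disentangle every auxiliary register so that the correct amplitudes collect coherently on the $L$-qubit unary occupation register $\ket{\bigoplus_m e_{x_m}}_L$. Every constituent lemma runs in constant depth, so the whole circuit is constant-depth. The width is dominated by the resource-state step; the two options in Lemma \ref{lem:resource_state} yield the two cases of the theorem, namely the exact Dicke construction under $M=O(\sqrt{L})$, and the approximate one whose infidelity parameter can be pushed to $\epsilon=O(1/2^{L^L})$ because this still gives $l_{M,\epsilon}=O(L\log L)$ and keeps the total within $\tilde{O}(L^2\log L)$.

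First I would invoke Lemma \ref{lem:resource_state} to prepare the state of Eq.~(\ref{eq:resource_state}), which simultaneously provides (i) a uniform superposition over $\sigma\in S_M$ in the register $\ket{\sigma(0)}\cdots\ket{\sigma(M-1)}$ together with the Filling-Filtering bookkeeping, and (ii) the superposition $\sum_{\vec{x}\in\Xi_M^L}\ket{x_0}\cdots\ket{x_{M-1}}\ket{\bigoplus_m e_{x_m}}_L$ in which the ordered position tuple is already coherently correlated with the target occupation string. Applying the \textit{Uncompress} operation of Lemma \ref{lem:uncompress} brings $\ket{\sigma(l)}$ and $\ket{x_l}$ into the unary form required by Lemma \ref{lem:phase_k}, which then attaches the momentum phase $e^{i\sum_l k_{\sigma(l)}x_l}$; reversing the \textit{Uncompress} restores binary encoding. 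I would then use Lemma \ref{lem:inverse_of_permutation} to write $\sigma^{-1}$ into a fresh ancilla so that Lemma \ref{lem:phase_A} can deposit the scattering phase $A_\sigma(\bm{\theta})$. All of these steps are constant-depth and their widths are subsumed by the resource-state width.

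The main obstacle is the final disentanglement, after which the state must be supported on $\ket{\bigoplus_m e_{x_m}}_L$ alone with amplitudes $\sum_\sigma A_\sigma(\bm{\theta}) e^{i\sum_l k_{\sigma(l)}x_l}$. I would proceed in three stages: (a) run Lemma \ref{lem:inverse_of_permutation} and the Filling-Filtering bookkeeping in reverse to clear the $\sigma^{-1}$ and $\vec{j}$-registers, which is safe because those registers depend only on $\sigma$ and commute with the data-dependent phase; (b) coherently uncompute $\ket{x_0}\cdots\ket{x_{M-1}}$, using the fact that for $\vec{x}\in\Xi_M^L$ the ordered tuple is the list of positions of the ones in $\ket{\bigoplus_m e_{x_m}}_L$ and can be recovered from the occupation register by a constant-depth Hamming-weight-and-comparison routine followed by quantum fan-out; (c) apply the inverse of the unitary that created the uniform superposition on $\ket{\sigma(0)}\cdots\ket{\sigma(M-1)}$ and post-select on the all-zeros outcome. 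Step (c) is the probabilistic projection that sums over $\sigma$ coherently, and a short amplitude calculation analogous to the one in Ref.~\cite{van2021preparing} reproduces the $O(1/M!)$ success probability.

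The delicate point throughout is the ordering of (a), (b), and (c): if any ancilla retained joint entanglement with the pair $(\sigma,\vec{x})$ at the moment of projection, the coherent sum in (c) would be spoiled and one would not recover the Bethe amplitudes. The construction is arranged so that, at the moment each ancilla is erased, it is a reversible function of $\sigma$ alone or of the sorted $\vec{x}$ alone — never of the two jointly — and that is precisely why the resource state of Lemma \ref{lem:resource_state} was designed to carry both the ordered position register and the OR-ed occupation register through the phase-encoding steps.
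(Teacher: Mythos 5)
Your proposal follows essentially the same route as the paper's proof: prepare the resource state of Lemma \ref{lem:resource_state}, \textit{Uncompress}, attach the phases via Lemmas \ref{lem:phase_k} and \ref{lem:phase_A} (using Lemma \ref{lem:inverse_of_permutation}), remove the position registers by a \textit{Cleaning}-type routine, and finish with an LCU-style inverse of the preparation plus postselection on the all-zeros outcome, which yields the $O(1/M!)$ success probability and the two Dicke options giving the two cases of the theorem. One minor imprecision: the Filling--Filtering bookkeeping registers depend jointly on $\vec{j}$ and $\sigma$ (not on $\sigma$ alone), and the paper does not clear them deterministically beforehand but folds their removal into the single inverse-preparation-and-postselect step; it also tracks the extra $O(1/\sqrt{M})$ factor from the approximate Dicke preparation, giving a total success probability $O(1/(M!\sqrt{M}))$.
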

\begin{proof}
See Appendix \ref{app:proof_of_bethe}.
\end{proof} % 
It is important to note that if the second option is chosen, Theorem \ref{thm:bethe} approximates the Bethe wavefunction in principle. However, the $\log\log$ dependence on $1/\epsilon$ allows us to set the infidelity $\epsilon$ as small as $O(1/2^{L^L})$ by utilizing our circuit width of $\tilde{O}(L^2\log L)$, making the approximation error negligible.

% 결론에 우리는 기존에 존재하는 가능한 가장 짧은 circuit depth와 비교해보았을 때도 우리께 SOS 더 짧았다. 그러나 이것이 가장 효율적인 건 아닐 것이고, 최적의 조합을 찾아야한다. 
\section{conclusions}\label{sec:conclusion}
In this study, we employed mid-circuit measurement and feedforward to reduce circuit depth in preparing a quantum state, especially for quantum simulation. Our study leverages the constant-depth implementation of quantum logic gates, at the cost of increased quantum circuit width. This trade-off between quantum circuit depth and width was possible through measurement and feedforward. Utilizing these logic gates, we focused on two key subroutines with constant-depth quantum circuit implementation: the transformation between superpositions of non-negative integer sets and the execution of a series of multi-controlled pairwise commuting gates.

The first subroutine utilizes unary encoding solely as an intermediate step, enabling a constant-depth quantum circuit that has a logarithmic dependence on the maximum integer value. This transformation allows the preparation of arbitrary $d$-sparse $n$-qubit sparse quantum state by a quantum circuit of depth $O(\log d)$ and width $O(dn \log n)$. Furthermore, this subroutine facilitates the preparation of a single (anti)symmetric state in a constant-depth quantum circuit. Supported by these results, $d$-sums of Slater determinants in the first quantization scheme can be prepared using a quantum circuit of depth $O(\log d)$ or $O(d)$, depending on the complexity of the given problem. 

The second subroutine significantly reduces circuit depth in encoding a large degree of freedom into the phase, making it particularly suitable for the preparation of a Bethe wavefunction. Leveraging this capability, the Bethe wavefunction can be prepared using a constant-depth quantum circuit, although its success probability remains $O(1/M!)$ consistent with the previous study. It is worth noting that preparing this state deterministically requires a quantum circuit of depth $O({L \choose M})$, as shown in Ref. \cite{raveh2024deterministic}. Even though the structure of the Bethe wavefunction is closely related to the two-body reducibility, this wavefunction is not restricted to eigenstates of integrable Hamiltonian. We expect that incorporating constant quantum circuit depth arbitrary state preparation algorithm \cite{zi2025constant} may allow extending our methodology to the complex scattering matrix and quasi momentum.

Overall, our work is committed to reducing quantum circuit depth as much as possible, even at the expense of increasing circuit width by leveraging measurements and feedforward. This approach contrasts with conventional methods that typically focus on reducing circuit size or depth without such trade-offs. We emphasize that our strategy is not without its drawbacks; the increased circuit width may incur additional challenges related to qubit connectivity and overall circuit implementation. Nonetheless, we believe that the benefits of our approach are substantial.

A key advantage of our method is its potential to overcome the intrinsic limitations imposed by the physical coherence time of qubits. In the absence of—or with imperfect—quantum error correction, it is crucial to complete quantum circuits within the physical coherence time to mitigate decoherence. According to publicly available data from the IBM Quantum platform, as of the date of this writing, the median $T_2$ of the IBM Kingston system is approximately $125 \mu s$, and the median operation time for a two-qubit gate is about $68 ns$ \cite{ibm_quantum_resources}. In an ideal scenario, this allows for the execution of up to
$\frac{125\mu s}{68ns} \approx 1840$ quantum circuit layers. Therefore, for circuits exceeding this depth, parallelization via measurements and feedforward while keeping the circuit size constant may offer a promising solution. For instance, the authors of Ref.,\cite{baumer2024measurement} experimentally demonstrated that a quantum fan-out gate implemented using measurements and feedforward achieved higher gate fidelity compared to its unitary counterpart, even though this method incurs a modest increase in circuit size as shown in Table \ref{tab:summary}. Furthermore, even in scenarios where quantum error correction is fully established and circuit width constraints are less critical, reducing circuit depth remains the primary objective and is more in line with the fundamental goals of quantum computing.

Moreover, our parallelization techniques are not limited to the quantum state preparation. They can be partially applied to other parts of quantum circuits to reduce the circuit depth. For instance, by using this technique, the number of CNOT gates can be reduced for computing the exponential of Hamiltonians \cite{kaldenbach2024mapping}, and the space-time complexity of Linear Combination of Unitaries (LCU) can also be reduced \cite{boyd2023low}. However, research in this area remains limited. As a future direction, we aim to explore broader applications of these parallelization techniques beyond quantum state preparation. Additionally, identifying a \textit{Goldilocks zone}-where circuit depth and width are optimally balanced in the context of the entire quantum computation-would serve as a promising starting point.

\section*{Acknowledgements}
This work was supported by the National Research Foundation of Korea (NRF) through a grant funded by the Ministry of Science and ICT (NRF-2022M3H3A1098237, RS-2025-00515537), and partially supported by the Institute for Information \& Communications Technology Promotion (IITP) grant funded by the Korean government (MSIP) (No. 2019-0-00003; Research and Development of Core Technologies for Programming, Running, Implementing, and Validating of Fault-Tolerant Quantum Computing Systems). H.E.K. acknowledges support by the education and training program of the Quantum Information Research Support Center, funded through the National research foundation of Korea (NRF) by the Ministry of science and ICT (MSIT) of the Korean government (No.2021M3H3A1036573). I.K.S. acknowledges support by Quantum Computing based on Quantum Advantage challenge research through the National Research Foundation of Korea (NRF) funded by the Korean government (MSIT) (RS-2023-00256221).

\section*{Author Contributions}
H.Y. conceived and suggested the main idea. H.Y., H.E.K., and I.K.S. elaborated on refining the work, and wrote the manuscript together. K.J. supervised the research. All authors discussed and analyzed the results, and contributed to the final manuscript.

\appendix
\section{Proof of Lemma \ref{lem:parity}}\label{app:proof_parity}
The normalization factor and amplitude remain unchanged from those stated in the main text’s lemmas and theorems, and are mostly clear. Therefore, they are omitted for readability. This applies to all appendices. One of the definitions of the parity $\pi(\sigma)$ is the number of inversion in $\sigma\in S_\eta$, which can be represented as below,
\begin{align}
    \pi(\sigma)&=\sum_{0\leq i<j\leq \eta-1} \mathds{1}_{\sigma(i),\sigma(j)}\\
    where~~ \mathds{1}_{\sigma(i),\sigma(j)} &=
    \begin{cases}
        1 & \quad \text{if } \sigma(i)>\sigma(j)\\
        0 & \quad \text{if } \sigma(i)<\sigma(j).
    \end{cases}
\end{align}
Since the \textit{Greatherthan} gate in Table. \ref{tab:operations} can be operated in a constant-depth quantum circuit, we can encode the information of inversions in the quantum state using the \textit{Greatherthan} and quantum fan-out gate in a constant-depth quantum circuit as follows,
\begin{widetext}
\begin{align*}
    &\sum_{\sigma\in S_\eta}\ket{j_{\sigma(0)}}_{\log \zeta}\dots  \ket{j_{\sigma(\eta-1)}}_{\log \zeta} \\
    &~~\longrightarrow  \sum_{\sigma\in S_\eta} \ket{j_{\sigma(0)}}^{\otimes \eta-1}\ket{j_{\sigma(1)}}^{\otimes \eta-1} \dots  \ket{j_{\sigma(\eta-1)}}^{\otimes \eta-1} \\ 
    &~~\longrightarrow \sum_{\sigma\in S_\eta} \bigg[\ket{j_{\sigma(0)}}^{\otimes \eta-1}\bigg( \ket{\mathds{1}_{j_{\sigma(0)}>j_{\sigma(1)}}} \dots \ket{\mathds{1}_{j_{\sigma(0)}>j_{\sigma(\eta-1)})}} \bigg) \\ \nonumber
    &\qquad\qquad~~\otimes \ket{j_{\sigma(1)}}^{\otimes \eta-1}\bigg( \ket{\mathds{1}_{j_{\sigma(1)}>j_{\sigma(2)}}}\dots\ket{\mathds{1}_{j_{\sigma(1)}>j_{\sigma(\eta-1)}}} \bigg)  \dots \ket{j_{\sigma(\eta-2)}}^{\otimes \eta-1} \ket{\mathds{1}_{j_{\sigma(\eta-2)}>j_{\sigma(\eta-1)}}} \ket{j_{\sigma(\eta-1)}}^{\otimes \eta-1}\bigg].
\end{align*}
\end{widetext}
Here, ${\eta \choose 2}$ \textit{Greatherthan} gates, necessiating $\eta-1$ copies of the state $\ket{j_{\sigma(i)}}_{\log \zeta}$, which can be generated using quantum fan-out gates. This operation can be implemented in a constant-depth quantum circuit of width $O(\eta^2 \log^2 \zeta)$ as each \textit{Greatherthan} gate requires $O(\log^2\zeta)$ auxiliary qubits. Since $\{j_i\}_{i=0}^{\eta-1}$ is ordered, applying $Z$ gates to all quantum state $\ket{\mathds{1}_{j_{\sigma(i)}>j_{\sigma(j)}}}$ allows us to attach the phase $(-1)^{\pi(\sigma)}$. Finally, the information about inversions is erased by repeating the above process. Overall, the dominant quantum circuit width is $O(\eta^2\log^2\zeta)$, stemming from applying \textit{Greatherthan} gates ${\eta \choose 2}$ times.

\section{Proof of Theorem \ref{thm:single_symmetric}}\label{app:proof_single_symmetric}
From Lemma \ref{lem:filling_filtering}, we obtain the state
\begin{equation}\label{eq:app_thm2_resource}
    \sum_{\vec{j}\in \Xi_\eta^\zeta}\sum_{\sigma\in S_\eta} \ket{j_{\sigma(0)},\dots,j_{\sigma(\eta-1)}} \dots  \ket{\bigoplus_{i=0}^{\eta-1}e_{j_i}},
\end{equation}
using a constant-depth quantum circuit of width $O(\eta\zeta\log\zeta)$ with measurements and feedforward. If one aims to prepare an antisymmetric state, Lemma \ref{lem:parity} enables the attachment of the phase $(-1)^{\pi(\sigma)}$ via a constant-depth quantum circuit of width $O(\eta^2\log^2\zeta)$. To simplify the notation, we will disregard the phase $(-1)^{\pi(\sigma)}$ in subsequent discussions. From Eq. (\ref{eq:app_thm2_resource}), we can leave only one set of positive integers $\{k_i\}_{i=0}^{\eta-1}$ by measuring and tracing out the unary register $\ket{\bigoplus_{i=0}^{\eta-1}e_{k_i}}$. Then, the state becomes
\begin{equation}\label{eq:single_sym_first}
    \sum_{\sigma\in S_\eta}\ket{k_{\sigma(0)},\dots,k_{\sigma(\eta-1)}},
\end{equation}
where $0\leq k_0<\dots<k_{\eta-1}\leq \zeta-1$. To transform this state into the superposition of the desired integers $\{r_i\}_{i=0}^{\eta-1}$, we have to iteratively apply \textit{Uncompress} and \textit{Compress} to this state as follows, which is described in Fig. \ref{subfig:whole_scheme}.
\begin{enumerate}
    \item For each $\ket{k_{\sigma(i)}}$, apply the \textit{Uncompress} operation using prior knowledge of $\{k_i\}_{i=0}^{\eta-1}$. This can be implemented using a constant-depth quantum circuit of width $O(\eta^2\log\zeta\log\log \zeta)$. The resulting state is
    \begin{equation}\nonumber
        \sum_{\sigma\in S_\eta} \ket{k_{\sigma(0)}}_{\log\zeta}\ket{e_{\sigma(0)}}_{\eta} \dots \ket{k_{\sigma(\eta-1)}}_{\log\zeta}\ket{e_{\sigma(\eta-1)}}_{\eta}.
    \end{equation}
    \item Remove the information of $\{k_i\}_{i=0}^{\eta-1}$ using the \textit{Compress} operation. This step is achieved via a constant-depth quantum circuit of width $O(\eta^2\log\zeta)$, which results as
    \begin{equation}\nonumber
        \sum_{\sigma\in S_\eta} \ket{0}_{\log\zeta}\ket{e_{\sigma(0)}}_{\eta} \dots \ket{0}_{\log\zeta}\ket{e_{\sigma(\eta-1)}}_{\eta}.
    \end{equation}
    \item Using the prior knowledge of $\{r_i\}_{i=0}^{\eta-1}$, apply the inverse of \textit{Compress}. This step requires a constant-depth quantum circuit of width $O(\eta^2\log N\log\log N)$. The resulting state is
    \begin{equation}\nonumber
        \sum_{\sigma\in S_\eta} \ket{r_{\sigma(0)}}_{\log N}\ket{e_{\sigma(0)}}_{\eta} \dots \ket{r_{\sigma(\eta-1)}}_{\log N}\ket{e_{\sigma(\eta-1)}}_{\eta}.
    \end{equation}
   \item Apply the inverse of \textit{Uncompress} to erase information in the unary registers, by a constant-depth quantum circuit of width $O(\eta^2\log N\log\log N)$. The resulting state is
    \begin{equation}\nonumber
        \sum_{\sigma\in S_\eta} \ket{r_{\sigma(0)}}_{\log N}\ket{0}_{\eta} \dots \ket{r_{\sigma(\eta-1)}}_{\log N}\ket{0}_{\eta}.
    \end{equation}
\end{enumerate}
Given the condition $\zeta\geq \eta^2$, we can set $\zeta=\eta^2$ without loss of generality. Under this assumption, the dominant quantum circuit depth is $O(1)$, and the circuit width is $O(\eta^2(\log\eta\log\log \eta+\log N\log\log N))$.

\section{Proof of Theorem \ref{thm:sos}}\label{app:proof_sos}
    Our proof begins with the preparation of a single antisymmetric state
    \begin{equation*}
        \sum_{\sigma\in S_\eta}(-1)^{\pi(\sigma)}\ket{\sigma(0)}\dots\ket{\sigma(\eta-1)},
    \end{equation*}
    using a constant-depth quantum circuit of width $O(\eta^3\log\eta\log\log\eta)$ as established in Theorem \ref{thm:single_symmetric}. To encode the amplitude $\psi_i$ of each Slater determinant $\sum_{\sigma\in S_\eta}\ket{r_{\sigma(0)}^i} \dots\ket{r_{\sigma(\eta-1)}^i}$ into a quantum state, it is necessary to prepare the state $\sum_{i=0}^{d-1}\psi_i\ket{i}_{\log d}$. This preparation can be achieved via a quantum circuit of depth $O(d/\log d)$ and width $O(\log d)$ \cite{sun2023asymptotically} or a quantum circuit of depth $O(\log d)$ and width $O(d)$ \cite{zhang2022quantum}. In this theorem, we propose methods for preparing the $d$-sums of Slater determinants under both approaches, due to the fact that the circuit width polynomial in $d$ may not be compatible with the first quantization scheme.
     
    \subsection{A quantum circuit of width logarithmic in $d$}
    Although the following process will be iterated for $i=0$ to $d-1$, this proof focuses on the case of $i=0$, as the procedure for the other terms is analogous. Furthermore, we will refer to the qubits $\ket{i}$ in the state $\sum_{i=0}^{d-1}\psi_i \ket{i}_{\log d}$ as the coefficient register.
    \begin{enumerate}
        \item Our objective is to entangle each $\ket{i}_{\log d}$ in the coefficient register $\sum_{i=0}^{d-1}\psi_i\ket{i}_{\log d}$, with $\ket{\sigma(0)}\dots\ket{\sigma(\eta-1)}$ in order to encode $\vec{r}^0,\dots,\vec{r}^{d-1}$. To achieve this, we first apply the $Equal_0$ gate, where the coefficient register serves as the control, to generate the quantum state $\ket{\mathds{1}_{i=0}}_1$ which is $\ket{1}_1$ if $i=0$, and $\ket{0}_1$ otherwise. Then, the state becomes
        \begin{equation*}
            \sum_{i=0}^{d-1}\psi_i\ket{i}_{\log d} \sum_{\sigma\in S_\eta} \ket{\mathds{1}_{i=0}}_1\ket{\sigma(0)}\dots\ket{\sigma(\eta-1)}.
        \end{equation*}
        Secondly, we apply quantum fan-out gate to generate $\eta$ copies of the state $\ket{\mathds{1}_{i=0}}_1$,
        \begin{equation*}
            \sum_{i=0}^{d-1}\psi_i\ket{i}_{\log d} \sum_{\sigma\in S_\eta} \ket{\mathds{1}_{i=0}}_1\ket{\sigma(0)}\dots\ket{\mathds{1}_{i=0}}_1\ket{\sigma(\eta-1)}.
        \end{equation*}
        The single $Equal_0$ gate requires a circuit width of $O(\log d \log\log d)$. Additionally, copying the state $\ket{\mathds{1}_{i=0}}_1$ $\eta$ times requires a circuit width of $O(\eta)$.
        \item Apply the \textit{Uncompress} operation to each quantum state $\ket{\mathds{1}_{i=0}}_1\ket{\sigma(j)}$ for $0\leq j \leq \eta-1$. Instead of using the original $Equal_{\eta}$ gate as described in Lemma \ref{lem:uncompress}, we employ the $Equal_{\eta+m}$ gate to solely modify the states associated with $\ket{i=0}_{\log d}$ in the superposition $\sum_{i=0}^{d-1}\psi_i\ket{i}_{\log d}$. Consequently, the corresponding unary registers are activated only for $i=0$ as shown below,
        \begin{align*}
            \sum_{\sigma\in S_\eta}\Bigl[ &\psi_0\ket{0}_{\log d} \ket{1}_1\ket{\sigma(0)}\ket{e_{\sigma(0)}}_{\eta}\dots\ket{1}_1 \ket{\sigma(\eta-1)}\ket{e_{\sigma(\eta-1)}}_\eta \\ \nonumber
            &+ \sum_{i\neq 0}^{d-1}\psi_i\ket{i}_{\log d} \ket{0}_1\ket{\sigma(0)}\ket{0}_{\eta}\dots\ket{0}_1 \ket{\sigma(\eta-1)}\ket{0}_\eta  \Bigr].
        \end{align*}
        These operations require the quantum circuit width of $O(\eta^2\log \eta\log\log \eta)$. 
        
        \item Next, we need to delete the information encoded in $\ket{i=0}_{\log d}$ and $\ket{\mathds{1}_{i=0}}$, leaving only the antisymmetric state. To do so, we first apply a quantum fan-out gate to uncompute $\ket{\mathds{1}_{i=0}}^{\otimes \eta}$ back to $\ket{\mathds{1}_{i=0}}$. Subsequently, we apply quantum fan-out gates, where the controlled qubit is $\ket{\mathds{1}_{i=0}}$, and the target qubits correspond to the $l$-th qubit of the coefficient register if the $l$-th bit of $i=0$ is 1. After these operations, the state becomes
            \begin{align*}
            \sum_{\sigma\in S_\eta}\Bigl[ &\psi_0\ket{0}_{\log d} \ket{1}_1\ket{\sigma(0)} \ket{e_{\sigma(0)}}_{\eta}\dots \ket{\sigma(\eta-1)}\ket{e_{\sigma(\eta-1)}}_\eta \\ \nonumber
            &+ \sum_{i\neq 0}^{d-1}\psi_i\ket{i}_{\log d} \ket{0}_1 \ket{\sigma(0)}\ket{0}_{\eta}\dots \ket{\sigma(\eta-1)}\ket{0}_\eta  \Bigr].
        \end{align*}
        This process can be implemented using a constant-depth quantum circuit of width $O(\log d)$.
        To further remove the information encoded in $\ket{\mathds{1}_{i=0}}_1$, we apply OR gates, where the controlled qubits are $\ket{e_{\sigma(0)}}$, and the target qubit is $\ket{\mathds{1}_{i=0}}$. This operation allows us to successfully uncompute $\ket{\mathds{1}_{i=0}}$, requiring a constant-depth quantum circuit of width $O(\eta\log\eta)$. At this stage, the state becomes
        \begin{align*}
            \sum_{\sigma\in S_\eta}\Bigl[ &\psi_0\ket{0}_{\log d} \ket{0}_1\ket{\sigma(0)}\ket{e_{\sigma(0)}}_{\eta}\dots\ket{\sigma(\eta-1)} \ket{e_{\sigma(\eta-1)}}_\eta \\ \nonumber
            &+ \sum_{i\neq k}^{d-1}\psi_i\ket{i}_{\log d} \ket{0}_1\ket{\sigma(0)}\ket{0}_{\eta}\dots  \ket{\sigma(\eta-1)}\ket{0}_\eta  \Bigr].
        \end{align*}

        \item Henceforth, we dismiss the qubit $\ket{0}_1$ that was previously used to encode $\ket{\mathds{1}_{i=0}}$. To encode $\vec{r}^0$ using the quantum state $\ket{e_{\sigma(j)}}_\eta$, we apply \textit{Compress} operation to each state $\ket{\sigma(j)}$ for $0\leq j \leq \eta-1$. This transforms the state as follows, 
        \begin{align*}
            \sum_{\sigma\in S_\eta}\Bigl[ &\psi_0\ket{0}_{\log d} \ket{0}_{\log\eta}\ket{e_{\sigma(0)}}_{\eta}\dots \ket{0}_{\log\eta}\ket{e_{\sigma(\eta-1)}}_\eta \\ \nonumber
           & + \sum_{i\neq k}^{d-1}\psi_i\ket{i}_{\log d} \ket{\sigma(0)}_{\log\eta}\ket{0}_{\eta}\dots \ket{\sigma(\eta-1)}_{\log\eta}\ket{0}_\eta  \Bigr].
        \end{align*}
        This process can be implemented using a constant-depth quantum circuit of width $O(\eta^2\log\eta)$. Finally, applying the inverse operations of \textit{Compress} and \textit{Uncompress} ensures that only the information of $\vec{r}^0$ remains in the $\psi_0$ part as follows,
        \begin{widetext}
        \begin{align*}
            \sum_{\sigma\in S_\eta}\Bigl[ &\psi_0\ket{0}_{\log d} \ket{r^0_{\sigma(0)}}_{\log N}\ket{0}_{\eta}\dots \ket{r^0_{\sigma(\eta-1)}}_{\log N}\ket{0}_\eta \\ \nonumber
            &+ \sum_{i\neq k}^{d-1}\psi_i\ket{i}_{\log d} \ket{0}_{\log \frac{N}{\eta}} \ket{\sigma(0)}_{\log\eta}\ket{0}_{\eta}\dots\ket{0}_{\log \frac{N}{\eta}} \ket{\sigma(\eta-1)}_{\log \eta}\ket{0}_\eta  \Bigr].
        \end{align*}
        \end{widetext}
        The inverse operations of \textit{Compress} and \textit{Uncompress} can be implemented by a constant-depth quantum circuit of width $O(\eta^2\log N\log\log N)$.
    \end{enumerate}
    Overall, iterating this process from $i=0$ to $d-1$ requires a quantum circuit depth $O(d)$, and the dominant quantum circuit width is given by $O(\eta^2(\eta\log \eta\log\log\eta + \log N\log\log N)+\log d)$. 
    
    \subsection{With the quantum circuit width polynomial in $d$}
    If we aim to utilize a quantum circuit width of $O(d)$, it becomes possible to access the unary encoding of $\sum_{i=0}^{d-1}\ket{e_i}_d$ to reduce the circuit depth. To achieve this, we transform the state $\sum_{i=0}^{d-1}\psi_i\ket{i}_{\log d}$ as $\sum_{i=0}^{d-1}\psi_i\ket{e_i}_d$, which requires $\tilde{O}(d\log d)$ additional qubits. We refer to the qubits that encode $\ket{e_i}$ as the coefficient register.
    \begin{enumerate}
        \item Similar to the previous case, we need to entangle the coefficient register with the state $\ket{\sigma(0)}\dots\ket{\sigma(\eta-1)}$. To avoid a circuit depth of $O(d)$, we first copy the coefficient register $\eta^2$ times, transforming the state as
        \begin{equation*}
            \sum_{i=0}^{d-1}\psi_i \ket{e_i}_d^{\otimes \eta}\ket{\sigma(0)}_{\log\eta}\dots \ket{e_i}_d^{\otimes \eta} \ket{\sigma(\eta-1)}_{\log\eta}.
        \end{equation*}
        This can be implemented using a constant-depth quantum circuit of width $O(\eta^2 d)$. Second, we copy each $\ket{\sigma(j)}$ $\eta d$ times for $0\leq j \leq \eta-1$, where the resulting state is
        \begin{equation}\label{app:proof_sos_fromunary}
            \sum_{i=0}^{d-1}\psi_i \ket{e_i}_d^{\otimes \eta}\ket{\sigma(0)}_{\log\eta}^{\otimes \eta d}\dots \ket{e_i}_d^{\otimes \eta} \ket{\sigma(\eta-1)}_{\log\eta}^{\otimes \eta d}.
        \end{equation}
        The total number of qubits required to prepare this state is $O(\eta^2 d \log\eta))$.
        
        \item For simplicity, let us focus on the $(j+1)$-th pair of quantum states $\ket{e_i}^{\otimes \eta}\ket{\sigma(j)}^{\otimes \eta d}$ from Eq. (\ref{app:proof_sos_fromunary}). Now, we represent this state as follows,
        \begin{equation}\label{app:proof_sos_fromunary2}
            \ket{e_i^0}^{\otimes \eta}\ket{\sigma(j)}^{\otimes \eta}\dots \ket{e_i^{d-1}}^{\otimes \eta}\ket{\sigma(j)}^{\otimes \eta},
        \end{equation} 
        where the state $\ket{e_i^{l}}$ refers to the $(l+1)$-th qubit of $\ket{e_i}$, which is equal to $\ket{\mathds{1}_{i=l}}$. For the $(l+1)$-th pair of quantum states $\ket{e_i^{l}}^{\otimes \eta}\ket{\sigma(j)}^{\otimes \eta}$ in Eq. (\ref{app:proof_sos_fromunary2}), we attach additional $\eta$ qubits $\ket{0}_1^{\otimes \eta}$. Next, we apply the $Equal_{\eta+m}$ gate, where the controlled qubits are the $m$-th $\ket{e_i^{l}}\ket{\sigma(j)}$ pair and the target qubit is the $m$-th $\ket{0}$ among the newly added $\ket{0}_1^{\otimes\eta}$. By operating this $Equal$ gates simultaneously for all $0\leq m,j \leq \eta-1$, and $0\leq l \leq d-1$, we effectively \textit{Uncompress} each $\ket{\sigma(j)}$ while entangling it with each $\ket{e_i}$.
        After these operations, our quantum state becomes
        \begin{widetext}
        \begin{align*}
            &\psi_0\bigotimes_{j=0}^{\eta-1} \ket{e_0^0}^{\otimes \eta}\ket{\sigma(j)}^{\otimes \eta} \ket{e_{\sigma(j)}}_{\eta}\ket{e_0^1}^{\otimes \eta} \ket{\sigma(j)}^{\otimes \eta}\ket{0}_\eta \dots \ket{e_0^{d-1}}^{\otimes \eta}\ket{\sigma(j)}^{\otimes\eta}\ket{0}_\eta\\ \nonumber
            &+ \psi_1 \bigotimes_{j=0}^{\eta-1} \ket{e_1^0}^{\otimes \eta}\ket{\sigma(j)}^{\otimes \eta} \ket{0}_{\eta}\ket{e_1^1}^{\otimes \eta}\ket{\sigma(j)}^{\otimes \eta}\ket{e_{\sigma(j)}}_\eta \dots \ket{e_1^{d-1}}^{\otimes \eta}\ket{\sigma(j)}^{\otimes\eta}\ket{0}_\eta +\dots \\ \nonumber
            &+\psi_{d-1}\bigotimes_{j=0}^{\eta-1} \ket{e_{d-1}^0}^{\otimes \eta}\ket{\sigma(j)}^{\otimes \eta} \ket{0}_{\eta}\ket{e_{d-1}^1}^{\otimes \eta}\ket{\sigma(j)}^{\otimes \eta}\ket{0}_\eta \dots \ket{e_{d-1}^{d-1}}^{\otimes \eta} \ket{\sigma(j)}^{\otimes\eta}\ket{e_{\sigma(j)}}_\eta, 
        \end{align*}
        \end{widetext}
        where the total number of qubits used for $\eta^2 d$ Equal gates is $O(\eta^2 d \log \eta \log\log \eta)$. By removing the redundant information from $\ket{\sigma(j)}^{\otimes\eta d}$ and the coefficient register, the state is transformed into one entangled with  $\psi_i\ket{e_i}$ as follows,
        \begin{equation*}
            \psi_i \bigotimes_{j=0}^{\eta-1} \ket{e_i^0}\ket{\sigma(j)}\ket{0}_\eta \dots \ket{e_k^k} \ket{e_{\sigma(j)}}_\eta\dots \ket{e^{d-1}_0}\ket{0}_\eta.
        \end{equation*}
        This process is equivalent to controlled-\textit{Uncompress} operations, where the target qubits correspond to the $i$-th unary register if the controlled qubit is the $i$-th qubit of the coefficient register.
        
        \item To disentangle the coefficient register from the other states, we utilize the location of $\ket{e_{\sigma(j)}}_{\eta}$. Specifically, we operate \textit{OR} gates, where the controlled qubits are the qubits in the $p$-th unary register, and the target qubit is the $p$-th qubit of the coefficient register. Then, the state corresponding to $\psi_i\ket{e_i}$ becomes
        \begin{align*}
            &\psi_i\bigotimes_{j=0}^{\eta-1} \ket{e^0_i}_1 \ket{\sigma(j)}\ket{0}_{\eta}\dots \ket{e^i_i\oplus 1}_1 \ket{e_{\sigma(j)}}_\eta \dots \ket{0}_\eta 
            %\\&=\psi_i\bigotimes_{j=0}^{\eta-1} \ket{0}_d \ket{\sigma(j)}\ket{0}_{\eta}\dots  \ket{e_{\sigma(j)}}_\eta \dots \ket{0}_\eta.
            \end{align*}
        This process requires $\eta d$ \textit{OR} gates operations, and $O(\eta^2 d \log \eta)$ additional qubits.

        \item Since different $\psi_i$ share the same state $\ket{\sigma(j)}$, we can \textit{Compress} $\ket{\sigma(j)}$ collectively. To achieve this, we attach additional $\eta$ qubits $\ket{0}_1^{\otimes \eta}$ and apply \textit{OR} gates, where the controlled qubits are the $q$-th qubits of each unary registers, and the target qubit is the $q$-th qubit of the newly added $\ket{0}_1^{\otimes \eta}$ for $0\leq q\leq \eta-1$. As a result, the state $\ket{0}_1^{\otimes \eta}$ is transformed into the shared state $\ket{e_{\sigma(j)}}_\eta$ for $0\leq j \leq \eta-1$. After \textit{Compress} through this state, we obtain
        \begin{align*}
           &\psi_0 \bigotimes_{j=0}^{\eta-1} \ket{e_{\sigma(j)}}_\eta\ket{e_{\sigma(j)}}_{\eta} \ket{0}_\eta \dots \ket{0}_\eta\\ \nonumber
            &+\psi_1\bigotimes_{j=0}^{\eta-1} \ket{e_{\sigma(j)}}_\eta\ket{0}_{\eta} \ket{e_{\sigma(j)}}_\eta \dots \ket{0}_\eta +\dots\\ \nonumber
            &+\psi_{d-1}\bigotimes_{j=0}^{\eta-1} \ket{e_{\sigma(j)}}_\eta \ket{0}_{\eta} \ket{0}_\eta \dots \ket{e_{\sigma(j)}}_\eta.
        \end{align*}
        To accomplish this, $\eta^2$ OR gates across $d$ qubits are required. Consequently, the quantum circuit for this operation has a circuit width of $O(\eta^2 d\log d)$. The shared state $\ket{e_{\sigma(j)}}_\eta$ can be easily uncomputed by reversing the above process.
        
        \item Now, we need to encode the information of $\vec{r}^0,\dots,\vec{r}^{d-1}$. To do so, we operate the inverse of \textit{Compress} operations in parallel. First, we attach $\ket{0}_{\log N}$ to all unary registers and apply Hadamard gates to these newly added qubits, resulting as
        \begin{align*}
           &\psi_0 \bigotimes_{j=0}^{\eta-1}\sum_{l_j^0,\dots,l_j^{d-1}=0}^{N-1} \ket{l_j^0} 
           \ket{e_{\sigma(j)}}_{\eta} \ket{l_j^1}\ket{0}_\eta \dots \ket{l_j^{d-1}}\ket{0}_\eta\\ \nonumber
           &+ \psi_1 \bigotimes_{j=0}^{\eta-1}\sum_{l_j^0,\dots,l_j^{d-1}=0}^{N-1} \ket{l_j^0} \ket{0}_{\eta} \ket{l_j^1}\ket{e_{\sigma(j)}}_\eta \dots \ket{l_j^{d-1}}\ket{0}_\eta +\dots\\ \nonumber
            &+\psi_{d-1}  \bigotimes_{j=0}^{\eta-1}\sum_{l_j^0,\dots,l_j^{d-1}=0}^{N-1} \ket{l_j^0} \ket{0}_{\eta} \ket{l_j^1}\ket{0}_\eta \dots \ket{l_j^{d-1}}\ket{e_{\sigma(j)}}_\eta.
        \end{align*}
        The total number of additional qubits required in this process is $O(\eta d \log N)$. Next, we copy each $\ket{l_j^s}$ $\eta d$ times for all $0\leq j \leq \eta-1$ and $0\leq s \leq d-1$. Then, the entire quantum state becomes
        \begin{widetext}
            \begin{align*}
           &\psi_0 \bigotimes_{j=0}^{\eta-1}\sum_{l_j^0,\dots,l_j^{d-1}=0}^{N-1} \ket{e_{\sigma(j)}}_\eta \bigg(\ket{l_j^0}\ket{l_j^1}\dots \ket{l_j^{d-1}}\bigg)^{\otimes \eta} 
           \ket{0}_{\eta}  \bigg(\ket{l_j^0}\ket{l_j^1}\dots \ket{l_j^{d-1}}\bigg)^{\otimes \eta}  \dots \ket{0}_{\eta} \bigg(\ket{l_j^0}\ket{l_j^1}\dots \ket{l_j^{d-1}}\bigg)^{\otimes \eta} \\ \nonumber
           &+ \psi_1 \bigotimes_{j=0}^{\eta-1}\sum_{l_j^0,\dots,l_j^{d-1}=0}^{N-1} \ket{0}_\eta \bigg(\ket{l_j^0}\ket{l_j^1}\dots \ket{l_j^{d-1}}\bigg)^{\otimes \eta} 
           \ket{e_{\sigma(j)}}_{\eta}  \bigg(\ket{l_j^0}\ket{l_j^1}\dots \ket{l_j^{d-1}}\bigg)^{\otimes \eta}  \dots \ket{0}_{\eta} \bigg(\ket{l_j^0}\ket{l_j^1}\dots \ket{l_j^{d-1}}\bigg)^{\otimes \eta} +\dots \\
            &+\psi_{d-1}  \bigotimes_{j=0}^{\eta-1}\sum_{l_j^0,\dots,l_j^{d-1}=0}^{N-1}  \ket{0}_\eta \bigg(\ket{l_j^0}\ket{l_j^1}\dots \ket{l_j^{d-1}}\bigg)^{\otimes \eta} 
           \ket{0}_{\eta}  \bigg(\ket{l_j^0}\ket{l_j^1}\dots \ket{l_j^{d-1}}\bigg)^{\otimes \eta}  \dots \ket{e_{\sigma(j)}}_{\eta} \bigg(\ket{l_j^0}\ket{l_j^1}\dots \ket{l_j^{d-1}}\bigg)^{\otimes \eta}.
        \end{align*}
        \end{widetext}
        The number of additional qubits required for reproducing $\ket{l_i}_{\log N}$ is $O(\eta^2 d^2\log N)$. Next, we apply appropriate CZZ$\dots$Z gates between a unary register and the corresponding phase register $\bigg(\ket{l_j^0}\ket{l_j^1}\dots \ket{l_j^{d-1}}\bigg)^{\otimes \eta}$. The controlled qubit is the $s$-th qubit of the $t$-th unary register and the targets are the qubits in the $s$-th $\bigg(\ket{l_j^0}\ket{l_j^1}\dots \ket{l_j^{d-1}}\bigg)$ of the $t$-th phase register. Similar to the \textit{Compress} operation, we select the $u$-th qubits of each $\ket{l_j^0},\dots,\ket{l_j^{d-1}}$ as target qubits, if the $u$-th bit of $r_t^s$ is 1. Here, the states $\ket{l_j^{v}}$ belong to the $s$-th $\bigg(\ket{l_j^0}\ket{l_j^1}\dots \ket{l_j^{d-1}}\bigg)$. By simultaneously applying these gates for $0\leq t \leq d-1$, $0\leq s \leq \eta-1$ and $0\leq u \leq \log N -1$, and uncomputing the copied $\ket{l_j^v}$s, the resulting quantum state becomes
        \begin{widetext}
            \begin{align*}
           &\psi_0 \bigotimes_{j=0}^{\eta-1}\sum_{l_j^0,\dots,l_j^{d-1}=0}^{N-1} (-1)^{r_{\sigma(j)}^0\cdot l_j^0}\ket{e_{\sigma(j)}}_\eta \ket{l_j^0}
           (-1)^{r_{\sigma(j)}^0\cdot l_j^1}\ket{0}_{\eta} \ket{l_j^1}\dots (-1)^{r_{\sigma(j)}^0\cdot l_j^{d-1}}\ket{0}_{\eta}\ket{l_j^{d-1}} \\ \nonumber
           &+  \psi_1 \bigotimes_{j=0}^{\eta-1}\sum_{l_j^0,\dots,l_j^{d-1}=0}^{N-1} (-1)^{r_{\sigma(j)}^1\cdot l_j^0}\ket{e_{\sigma(j)}}_\eta \ket{l_j^0}
           (-1)^{r_{\sigma(j)}^1\cdot l_j^1}\ket{0}_{\eta} \ket{l_j^1}\dots (-1)^{r_{\sigma(j)}^1\cdot l_j^{d-1}}\ket{0}_{\eta}\ket{l_j^{d-1}}+\dots
            \\ \nonumber
            &+  \psi_{d-1} \bigotimes_{j=0}^{\eta-1}\sum_{l_j^0,\dots,l_j^{d-1}=0}^{N-1} (-1)^{r_{\sigma(j)}^{d-1}\cdot l_j^0}\ket{e_{\sigma(j)}}_\eta \ket{l_j^0}
           (-1)^{r_{\sigma(j)}^{d-1}\cdot l_j^1}\ket{0}_{\eta} \ket{l_j^1}\dots (-1)^{r_{\sigma(j)}^{d-1}\cdot l_j^{d-1}}\ket{0}_{\eta}\ket{l_j^{d-1}}.
        \end{align*}
        \end{widetext}
        Given that the maximum number of target qubits of each CZZ$\dots$Z gate is $d\log N$, and there are $\eta^2 d$ CZZ$\dots$Z gates in this process, the operation can be implemented using a constant-depth quantum circuit of width $O(\eta^2 d^2\log N)$. At this stage, we are ready to encode $\vec{r}^0,\dots,\vec{r}^{d-1}$ into a quantum state by simply applying Hadamard gates to all $\ket{l_j^v}s$ states as below,
        \begin{align*}
            &\psi_0\bigotimes_{j=0}^{\eta-1}  \ket{r_{\sigma(j)}^0} \ket{e_{\sigma(j)}}_\eta \ket{r_{\sigma(j)}^0}\ket{0}_\eta \dots\ket{r_{\sigma(j)}^0}\ket{0}_\eta +\dots \\ \nonumber
            &+\psi_{d-1}\bigotimes_{j=0}^{\eta-1}\ket{r_{\sigma(j)}^{d-1}} \ket{0}_\eta \ket{r_{\sigma(j)}^{d-1}} \ket{0}_\eta\dots\ket{r_{\sigma(j)}^{d-1}}\ket{e_{\sigma(j)}}_\eta.
        \end{align*}
        %&+\psi_1\bigotimes_{j=0}^{\eta-1} \ket{r_{\sigma(j)}^1} \ket{0}_\eta \ket{r_{\sigma(j)}^1}\ket{e_{\sigma(j)}}_\eta\dots\ket{r_{\sigma(j)}^1}\ket{0}_\eta \\ \nonumber
        \item Now, we need to disentangle the unary registers from the binary registers. This can be achieved by extracting the information of the permutation $\sigma$ from the binary registers, as all binary registers share the same permutation. Specifically, we can apply \textit{Uncompress} operation 1) from the zeroth binary register to the zeroth unary register for each $j$ where the target integer set is $\vec{r}^0$, 2) from the first binary register to the corresponding one where the target integer set is $\vec{r}^1$, and so on. However, this approach fails if there are repetitions among different integer sets. To address this issue, we mark the set $\vec{r}^0$ in the $i$-th binary register for each $j$, by copying binary registers and applying the $Equal_{r^i_k}$ gates to each copy for $0\leq k \leq \eta-1$. If the state of the binary register is $\ket{r^i}$, the output register of the $Equal_{r^i_k}$ gates will become $\ket{11\dots 1}$. From now on, we refer to the output register of $Equal$ gates as the index register. Applying \textit{AND} gate to the index register allows us to transform the state as follows,
        \begin{widetext}
        \begin{align*}
           &\psi_0 \bigotimes_{j=0}^{\eta-1}\sum_{l_j^0,\dots,l_j^{d-1}=0}^{N-1} \ket{r^0_{\sigma(j)}}^{\otimes \eta} \bigg[\ket{\mathds{1}_{r^0_{\sigma(j)}=r^0_0}}\dots\ket{\mathds{1}_{r^0_{\sigma(j)}=r^0_{\eta-1}}}\ket{1}_1 \bigg] \ket{e_{\sigma(j)}}_\eta \dots \ket{r^0_{\sigma(j)}}^{\otimes \eta} \bigg[\ket{\mathds{1}_{r^0_{\sigma(j)}=r^{d-1}_0}}\dots\ket{\mathds{1}_{r^0_{\sigma(j)}=r^{d-1}_{\eta-1}}}\ket{0}_1\bigg]\ket{0}_\eta \\ \nonumber
           &+\dots\\ \nonumber
          &+\psi_{d-1} \bigotimes_{j=0}^{\eta-1}\sum_{l_j^0,\dots,l_j^{d-1}=0}^{N-1} \ket{r^{d-1}_{\sigma(j)}}^{\otimes \eta} \bigg[\ket{\mathds{1}_{r^{d-1}_{\sigma(j)}=r^0_0}}\dots\ket{\mathds{1}_{r^{d-1}_{\sigma(j)}=r^0_{\eta-1}}}\ket{0}_1 \bigg] \ket{0}_\eta \dots \ket{r^{d-1}_{\sigma(j)}}^{\otimes \eta} \bigg[\ket{\mathds{1}_{r^{d-1}_{\sigma(j)}=r^{d-1}_0}}\dots\ket{\mathds{1}_{r^{d-1}_{\sigma(j)}=r^{d-1}_{\eta-1}}}\ket{1}_1\bigg]\ket{e_{\sigma(j)}}_\eta.
        \end{align*}
        \end{widetext}
        This operation requires $O(\eta^2 d \log N\log\log N)$ auxiliary qubits. At this stage, the index register of the $i$-th unary register for any $j$ is $\ket{1}_1$ only if the corresponding coefficient is $\psi_i$. After copying each index register $\eta$ times, we apply the \textit{Uncompress} operation, where both the index register and the binary register serve as control register, and the target register is the corresponding unary register. This process removes the information of $\ket{e_{\sigma(j)}}_\eta$ from the unary register without changing other states, and it requires $O(\eta^2 d \log N)$ auxiliary qubits. Finally, we obtain the desired state after uncomputing all redundant states.
        \end{enumerate}
        
\section{Proof of Lemma \ref{lem:resource_state}}\label{app:proof_of_resource_state}
The first summand of the desired state, involving $\sum_{\vec{j}\in \Xi_{M}^{M^2}} \sum_{\sigma\in S_M}$ but excluding the state $\ket{\sigma(0)},\dots\ket{\sigma(M-1)}$ can be obtained using Lemma \ref{lem:filling_filtering}. We can obtain the state $\ket{\sigma(0)},\dots\ket{\sigma(M-1)}$ in the same way as we describe in the second version of the proof of Theorem \ref{thm:sos}, requiring $O(M^2\log M \log\log L)$ auxiliary qubits in this case. Therefore, the first part of the desired quantum state can be prepared using a constant-depth quantum circuit of width $O(M^3\log M)$ with measurements and feedforward. For the second part of the state, involving $\sum_{\vec{x}\in \Xi_{M}^L}$, we prepare the Dicke-$(L,M)$ state in a constant-depth quantum circuit as described in ref. \cite{buhrman2024state} or Ref. \cite{piroli2024approximating}. Finally, we apply the inverse of the \textit{Cleaning} operation described in Ref. \cite{buhrman2024state}, which transforms the Dicke state into the desired state using a constant-depth quantum circuit of width $\tilde{O}(L^2\log L)$.
    
The algorithm in Ref. \cite{buhrman2024state} deterministically prepares exact Dicke-$(L,M)$ using a constant-depth quantum circuit of width $\tilde{O}(L^2\log L)$ with measurements and feedforward. However, this method imposes the constraint $M\leq\sqrt{L}$. In contrast, the algorithms in Ref. \cite{piroli2024approximating} approximate the Dicke state with the infidelity of $\epsilon$ through different methods, each utilizing distinct resources. These methods do not impose a constraint such as $M\leq\sqrt{L}$. To reduce the circuit depth, we select a constant-depth quantum circuit among three of them, which has the circuit width of $O(Ll_{M,\epsilon})$ and a success probability of $O(1/\sqrt{M})$. Given that the form of $l_{M,\epsilon}$ which is 
\begin{equation*}
    l_{M,\epsilon}=\max \bigg\{ \log_2 (4M), 1+\log_2\ln(\sqrt{8\pi M}/\epsilon) \bigg\},
\end{equation*}
and reusability of our additional qubits, the infidelity $\epsilon$ can be decreased to extremely small value if we compare $l_{M,\epsilon}$ to $L\log L$.

\section{Proof of Lemma \ref{lem:phase_k}}\label{app:proof_of_phase_k}
Since the information of $\sigma(i)$ and $x_i$ is already encoded in the location, one can attach $\exp(k_{\sigma(i)}x_i)$ by applying controlled-phase gates $CP(\alpha, \beta)$ as follows,
\begin{equation*}
    CP(\alpha,\beta)=
    \begin{pmatrix}
    1 & 0 & 0 & 0\\
    0 & 1 & 0 & 0\\
    0 & 0 & 1 & 0\\
    0 & 0 & 0 & e^{ik_\alpha \beta}
    \end{pmatrix}.
\end{equation*}
Here, the controlled qubit is the $\alpha$-th qubits of $\ket{e_{\sigma(i)}}$, and the target qubit is the $\beta$-th qubit of $\ket{e_{x_i}}$, respectively. By applying these gates to all $0\leq i,\alpha \leq M-1,$ and $0\leq\beta\leq L-1$, we can attach the phase $\exp(\sum_{i=0}^{M-1}k_{\sigma(i)}x_i)$. Using Theorem \ref{lem:controlled}, these operations can be parallelized by creating replicas using quantum fan-out gates as follows,
\begin{widetext}
\begin{align*}
    &\sum_{\vec{x}\in\Xi_M^L} \sum_{\sigma\in S_M} \ket{e_{\sigma(0)}}\dots\ket{e_{\sigma(M-1)}} \ket{e_{x_0}}\dots\ket{e_{x_{M-1}}}\\
    &~~\longrightarrow  \sum_{\vec{x}\in\Xi_M^L} \sum_{\sigma\in S_M}\ket{e_{\sigma(0)}}^{\otimes L}\dots\ket{e_{\sigma(M-1)}}^{\otimes L} \ket{e_{x_0}}^{\otimes M}\dots\ket{e_{x_{M-1}}}^{\otimes M}\\
    &~~\longrightarrow \sum_{\vec{x}\in\Xi_M^L} \sum_{\sigma\in S_M} \exp\bigg(\sum_{i=0}^{M-1}k_{\sigma(i)}x_i\bigg)
    \ket{e_{\sigma(0)}}^{\otimes L}\dots\ket{e_{\sigma(M-1)}}^{\otimes L}  \ket{e_{x_0}}^{\otimes M}\dots\ket{e_{x_{M-1}}}^{\otimes M}
\end{align*}
\end{widetext}
The dominant width of this quantum circuit is $O(M^2 L)$.

\section{Proof of Lemma \ref{lem:inverse_of_permutation}}\label{app:proof_of_inverse_of_permutation}
First, we apply \textit{Uncompress} operation to the state $\ket{\sigma(i)}$,
\begin{align*}
    &\sum_{\sigma\in S_M}\ket{\sigma(0)}\dots\ket{\sigma(M-1)}\\
    &~~\longrightarrow  \sum_{\sigma\in S_M}\ket{\sigma(0)}\dots\ket{\sigma(M-1)}\ket{e_{\sigma(0)}}\dots\ket{e_{\sigma(M-1)}},
\end{align*}
where the quantum circuit width of this operation is $O(M^2\log M)$. Now we add $M^2$ qubits as follows,
\begin{equation*}
    \sum_{\sigma\in S_M}\ket{\sigma(0)}\dots\ket{\sigma(M-1)}\ket{e_{\sigma(0)}}\dots\ket{e_{\sigma(M-1)}}\ket{0}_{M}\dots\ket{0}_M,
\end{equation*}
and we notate the $i$-th $\ket{0}_M$ as the $i$-th inverse register. To prepare the quantum state $\ket{e_{\sigma^{-1}(i)}}$, we apply a CX gate, where the controlled qubit is the $i$-th qubit of $\ket{e_{\sigma(j)}}$, and the target qubit is the $j$-th qubit of the $i$-th inverse register. Iterating this operations for all $0\leq i,j\leq M-1$ in parallel transforms the state as
\begin{align*}
    &\sum_{\sigma\in S_M} \ket{\sigma(0)}\dots \ket{\sigma(M-1)}\ket{e_{\sigma(0)}}\dots\ket{e_{\sigma(M-1)}} \\ \nonumber
    &\qquad~\otimes  \ket{e_{\sigma^{-1}(0)}}\dots\ket{e_{\sigma^{-1}(M-1)}},
\end{align*}
and this process is well depicted in Fig. \ref{fig:inverse_permutation}.

Finally, we apply the inverse of \textit{Uncompress} operation to remove the state $\ket{e_{\sigma(0)}}\dots\ket{e_{\sigma(M-1)}}$, and transform the state $\ket{e_{\sigma^{-1}(i)}}$ to $\ket{\sigma^{-1}(i)}$. We obtain the desired state in a constant-depth quantum circuit of width $O(M^2\log M\log\log M)$ using measurements and feedforward. 

\section{Proof of Lemma \ref{lem:phase_A}}\label{app:proof_of_phase_A}
Due to the properties of the Bethe equation, $\theta_{i,j}=-\theta_{j,i}$ holds true, and every possible $\theta_{i,j}$ for all $i\neq j$ is included in $A_\sigma$. Thus, all we need to determine is whether the permutation $\sigma$ satisfies $\sigma(a)>\sigma(b)$ for two integers $0\leq a<b\leq M-1$. This information can be encoded in the quantum state by applying the \textit{Greatherthan} gates to the inverse of permutation $\sigma^{-1}$, as described in Lemma \ref{lem:inverse_of_permutation}. Once encoded, the phase $A_\sigma$ can be attached to the state by applying the appropriate phase gates. This process can be outlined in detail as follows.
\begin{enumerate}
    \item Using Lemma \ref{lem:inverse_of_permutation} to encode the inverse of permutation into a quantum state
    \begin{equation*}
        \sum_{\sigma\in S_M}\ket{\sigma(0)}\dots\ket{\sigma(M-1)}\ket{\sigma^{-1}(0)}\dots\ket{\sigma^{-1}(M-1)},
    \end{equation*}
    which can be implemented in a constant-depth quantum circuit of width $\tilde{O}(M^2\log M)$ using measurements and feedforward.
    \item From now on, we neglect the state $\ket{\sigma(j)}$, as it is not used in the remainder of our proof. By using quantum fan-out gates, we copy $\ket{\sigma^{-1}(i)}$ for all $i$ to obtain
    \begin{equation*}
        \sum_{\sigma\in S_M} \ket{\sigma^{-1}(0)}^{\otimes M-1}\dots\ket{\sigma^{-1}(M-1)}^{\otimes M-1},
    \end{equation*}
    where the quantum circuit width of $O(M^2 \log M)$ is required for this process.
    \item Apply \textit{Greatherthan} gates to all possible pairs of $\ket{\sigma^{-1}(i)}$ and $\ket{\sigma^{-1}(j)}$. The resulting state is represented as
    \begin{widetext}
    \begin{align*}
        &\sum_{\sigma\in S_M} \bigg[\ket{\sigma^{-1}(0)}^{\otimes M-1}\big[\ket{\mathds{1}_{\sigma^{-1}(0)<\sigma^{-1}(1)}}\dots\ket{\mathds{1}_{\sigma^{-1}(0)<\sigma^{-1}(M-1)}} \big]\\ \nonumber
        &\quad~~~~\otimes \ket{\sigma^{-1}(1)}^{\otimes M-1}\big[\ket{\mathds{1}_{\sigma^{-1}(1)<\sigma^{-1}(2)}}\dots\ket{\mathds{1}_{\sigma^{-1}(1)<\sigma^{-1}(M-1)}}\big]\otimes \dots \\ \nonumber
        &\quad~~~~\otimes  \ket{\sigma^{-1}(M-2)}^{\otimes M-1} \big[\ket{\mathds{1}_{\sigma^{-1}(M-2)<\sigma^{-1}(M-1)}}\big]\otimes\ket{\sigma^{-1}(M-1)}^{\otimes M-1}\bigg],
    \end{align*}
    \end{widetext}
    which requires $O(M^2(\log M)^2)$ quantum circuit width.
    \item Let us define a phase gate $P_\theta(i,j)$ as 
    \begin{equation}
    P_\theta(i,j)=
    \begin{pmatrix}
    \exp(i\theta_{j,i}/2) & 0 \\
    0 & \exp(i\theta_{i,j/2}) \\
    \end{pmatrix},
    \end{equation}
    By operating phase gates on the targets of the \textit{Greatherthan} gates, which are $\ket{\mathds{1}_{\sigma^{-1}(i)<\sigma^{-1}(j)}}$, we attach the appropriate phase $\exp(\frac{i}{2}\theta_{i,j})$ or $\exp(\frac{i}{2}\theta_{j,i})$, irrespective of the specific permutation $\sigma$. This is implemented simultaneously for all $0\leq i < j \leq M-1$. Finally, we uncompute all states except $\ket{\sigma(0)}\dots\ket{\sigma(M-1)}$ to obtain the desired state of Eq. (\ref{eq:A_second}). The dominant quantum circuit width arises from applying the \textit{Greatherthan} gates, which is $O(M^2\log^2M)$.
\end{enumerate}

\section{Proof of Theorem \ref{thm:bethe}}\label{app:proof_of_bethe}
    The detailed process of our proof can be outlined as follows:
    \begin{enumerate}
        \item By using Lemma \ref{lem:resource_state}, we obtain the state 
        \begin{align*} 
 &\sum_{\vec{j}\in \Xi_{M}^{M^2}}\sum_{\sigma \in S_M} \bigg[\ket{j_{\sigma(0)}} \dots  \ket{j_{\sigma(M-1)}}   \ket{\sigma(0)}\dots \ket{\sigma(M-1)}\ket{\bigoplus_{i=0}^{M-1}e_{j_i}}\\ \nonumber
        &\qquad\qquad~~~~\otimes \sum_{\vec{x}\in \Xi_M^L}  \ket{x_0}\dots\ket{x_{M-1}} \ket{\bigoplus_{k=0}^{M-1}e_{x_k}}\bigg]
        \end{align*}
        by a constant-depth quantum circuit of width $\tilde{O}(L^2\log L)$. While preparing this state, let us assume that we use the second subroutine described in Lemma \ref{lem:resource_state}. Even though this subroutine approximates the Dicke state, we can achieve an almost exact preparation with an infidelity of $O(2^{-L^L})$, by utilizing the full quantum circuit width of $\tilde{O}(L^2\log L)$ to minimize the infidelity. As a byproduct, this process requires iteration until it succeeds, with a success probability of $O(1/\sqrt{M})$.
    \item After performing \textit{Uncompress} operation on $\ket{x_0}\dots\ket{x_{M-1}}$ using a quantum circuit of width $\tilde{O}(M L\log L)$, we attach the phases $A_\sigma \exp (i\sum_{i=0}^{M-1} k_{\sigma(i)}x_i )$ by Lemmas \ref{lem:phase_k} and \ref{lem:phase_A}. This phase attachment can be implemented by a quantum circuit of width $\tilde{O}(M^2 (L+\log^2 M))$. By applying the inverse of \textit{Uncompress} operation to the state $\ket{e_{x_0}}\dots\ket{e_{x_{M-1}}}$, we obtain the state
    \begin{widetext}
    \begin{align*}
        \sum_{\sigma\in S_M}\sum_{\vec{x}\in \Xi_{M}^L}\sum_{\vec{j}\in \Xi_{M}^{M^2}} A_\sigma \exp\bigg(i\sum_{l=0}^{M-1} k_{\sigma(l)}x_l\bigg)
        \ket{j_{\sigma(0)}}\dots \ket{j_{\sigma(M-1)}} 
  \ket{\sigma(0)}\dots\ket{\sigma(M-1)} \ket{\bigoplus_{m=0}^{M-1}e_{j_m}} \ket{x_0}\dots\ket{x_{M-1}} \ket{\bigoplus_{p=0}^{M-1}e_{x_p}}.
    \end{align*}
    \end{widetext}
    \item Finally, we need to remove the information about the permutation $\sigma$ and the state $\ket{x_0}\dots\ket{x_{M-1}}$, only retaining the complete Bethe wavefunction. The latter can be removed by performing \textit{Cleaning} operation described in \cite{buhrman2024state}, which requires a quantum circuit of width $\tilde{O}(L^2\log L)$. For the former, in the same sprit as the Linear combination of Unitaries (LCU), applying the inverse gates used to encode the phases transforms the state into
    \begin{equation*}
        \sum_{\sigma\in S_M}\sum_{\vec{x}\in \Xi_{M}^L} A_\sigma \exp\bigg(i\sum_{l=0}^{M-1}k_{\sigma(l)}x_l\bigg)\ket{00\dots 0}\ket{\bigoplus_{p=0}^{M-1}e_{x_p}}+\ket{\psi_\bot}, 
    \end{equation*}
    where the state $\ket{\psi_\bot}$ refers to the residual state orthogonal to $\ket{00\dots 0}\ket{\bigoplus_{p=0}^{M-1}e_{x_p}}$. The probability of measuring $\ket{00\dots 0}$ in this case is known to be $O(1/M!)$ \cite{li2022bethe}.
    \end{enumerate}
    Overall, the dominant cost of preparing Bethe wavefunction arises from the \textit{Cleaning} operation, which is $\tilde{O}(L^2\log L)$, and the total probability of success is $O(\frac{1}{M!\sqrt{M}}).$

% The \nocite command causes all entries in a bibliography to be printed out
% whether or not they are actually referenced in the text. This is appropriate
% for the sample file to show the different styles of references, but authors
% most likely will not want to use it.
\nocite{*}

\bibliography{references}% Produces the bibliography via BibTeX.

\end{document}